\documentclass[a4paper,fleqn]{cas-sc}

\usepackage{url} 
\usepackage[authoryear,longnamesfirst]{natbib}

\newtheorem{theorem}{Theorem}
\newtheorem{definition}{Definition}
\newtheorem{lemma}[theorem]{Lemma}
\newdefinition{rmk}{Remark}
\newproof{pf}{Proof}
\newproof{pot}{Proof of Theorem \ref{thm2}}
\def\tsc#1{\csdef{#1}{\textsc{\lowercase{#1}}\xspace}}
\tsc{WGM}
\tsc{QE}
\tsc{EP}
\tsc{PMS}
\tsc{BEC}
\tsc{DE}

\begin{document}
\let\WriteBookmarks\relax
\def\floatpagepagefraction{1}
\def\textpagefraction{.001}
\shorttitle{Modeling Biocontrol of soybean pod borer (\textit{Leguminivora glycinivorella})}
\shortauthors{Wenxuan Li et~al.}

\title [mode = title]{
A data-driven stage-structured host-parasitoid model for optimizing \textit{Trichogramma} interventions against soybean pod borer (\textit{Leguminivora glycinivorella}) outbreaks}

\author[1]{Wenxuan Li}[style=chinese, orcid=0000-0002-3601-117X]
\ead{92530@jssvc.edu.cn}

\credit{Writing – original draft, Visualization, Validation,
Software, Methodology, Investigation, Formal analysis, Data curation}

\affiliation[1]{organization={Department of Mathematics and Physics},
                addressline={Suzhou Polytechnic University}, 
                city={Suzhou},
                postcode={215104}, 
                country={China}}

\author[2]{Xu Chen}[style=chinese, orcid=0000-0002-0866-6857]
\ead{chenxu21@mails.jlu.edu.cn}

\credit{Writing – original draft, Visualization, Validation,
Software, Methodology, Investigation, Formal analysis, Data curation}

\affiliation[2]{organization={School of Mathematics},
                addressline={Changchun University of Technology}, 
                city={Changchun},
                postcode={130012}, 
                country={China}}

\author[3]{Yu Gao}[style=chinese, orcid=0000-0002-4354-2665]
\cormark[1]
\ead{gaothrips@jlau.edu.cn}

\credit{Writing – review \& editing, Writing –
original draft, Supervision, Resources, Project administration, Funding
acquisition, Conceptualization}

\affiliation[3]{organization={College of Plant Protection},
                addressline={Jilin Agricultural University}, 
                city={Changchun},
                postcode={130118}, 
                country={China}}

\author[4]{Suli Liu}[style=chinese, orcid=0000-0002-3369-8578]
\cormark[2]
\ead{liusuli@jlu.edu.cn}

\credit{Writing – review \& editing, Writing –
original draft, Supervision, Resources, Project administration, Funding
acquisition, Conceptualization}

\affiliation[4]{organization={School of Mathematics},
                addressline={Jilin University}, 
                city={Changchun},
                postcode={130012}, 
                country={China}}

\cortext[cor1]{Corresponding author. No. 2888, Xincheng Street, Nanguan District, Changchun, 130118, China}
\cortext[cor1]{Corresponding author. No. 2699, Qianjin Street, Chaoyang District, Changchun, 130012, China}

\begin{abstract}
The soybean pod borer (\textit{Leguminivora glycinivorella}) poses a severe threat to global soybean production. 
In this study, we developed a stage-structured host-parasitoid dynamic model that explicitly couples the holometabolous life cycle of the pest with the obligate egg-parasitism mechanism of \textit{Trichogramma} wasps. 
Utilizing field monitoring data from Changchun, Jilin Province, key biological parameters were rigorously estimated via the Markov Chain Monte Carlo (MCMC) method. 
This calibration facilitated the establishment of a precise Economic Injury Level ($Q_{EIL}$) of 0.0389 individuals/$m^2$, based solely on the destructive larval stage. 
Through theoretical and numerical analyses of different intervention scenarios, we identified an optimal continuous release rate ($C^* = 2.645$) that efficiently suppresses the outbreak without causing wasteful parasitoid accumulation. 
Furthermore, simulations demonstrate that a 5-day impulsive release interval provides the optimal balance between strict pest suppression and field operational costs. 
This study bridges the gap between theoretical population dynamics and applied agricultural management, providing a directly applicable mathematical decision-making tool for the precise biological control of crop pests.  
\end{abstract}

\begin{graphicalabstract}
\includegraphics[width=\linewidth]{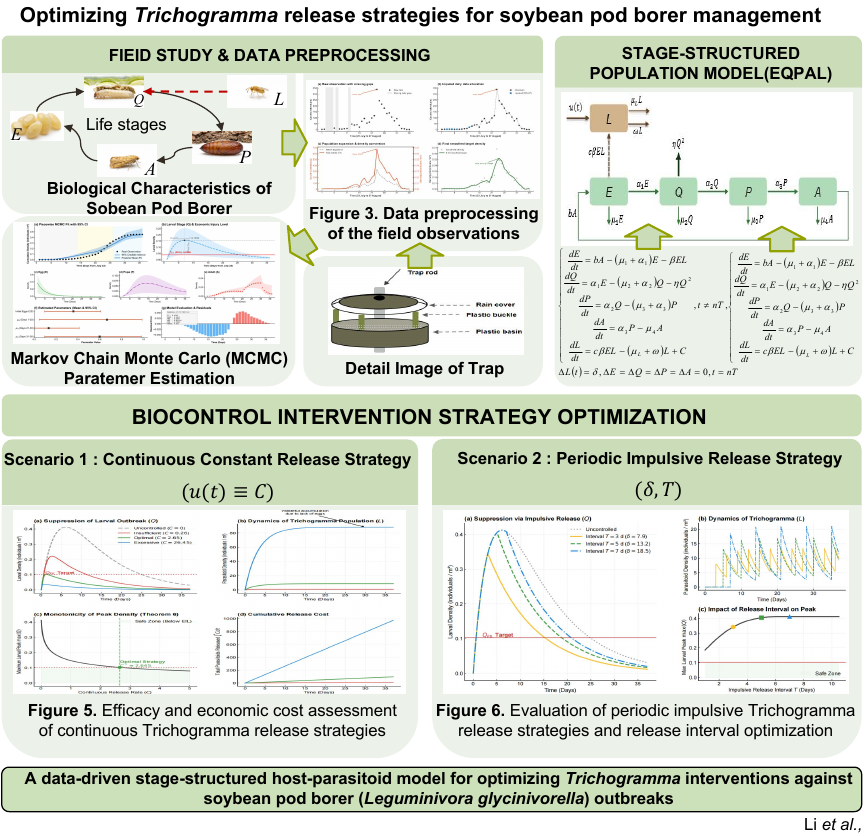}
\end{graphicalabstract}

\begin{highlights}
\item Established the first stage-structured compartmental model specifically for the soybean pod borer and \textit{Trichogramma}.
\item Applied the MCMC method to fit field monitoring data, successfully inferring key biological parameters.
\item Optimized practically viable \textit{Trichogramma} release strategies that balance pest suppression efficacy with economic costs.
\end{highlights}

\begin{keywords}
Soybean pod borer (\textit{Leguminivora glycinivorella}) \sep \textit{Trichogramma} \sep
Host-parasitoid model \sep Biological control
\end{keywords}

\maketitle

\section{Introduction}\label{Sec1}
Soybean is a core global crop with triple roles as a food source, oilseed, and feedstock \citep{hartman2011crops}.
However, the soybean pod borer poses a significant threat to global soybean production \citep{swaminathan2013food}. 
Its geographical distribution is primarily concentrated in East Asia, including China, Japan, North Korea, South Korea, and the Russian Far East (Figure \ref{Figure1}(a)). 
Among these regions, China is the most severely affected, with the pest predominantly distributed across the Northeast, North, Northwest, and Huang-Huai-Hai soybean-producing regions (Figure \ref{Figure1}(b)). 
Notably, the Northeast spring soybean region—encompassing Heilongjiang, Jilin, and Liaoning provinces—alongside Hebei and Shandong, are identified as high-risk, severe outbreak areas. 
Recent habitat suitability predictions indicate that the pest's potential distribution is expanding, further escalating its threat to these critical soybean-producing areas \citep{gao2018current,yang2024predicting}.
This pest inflicts damage by boring into soybean pods and feeding on developing seeds as larvae. It not only causes direct yield losses of 10\%–30\% but also significantly reduces the commercial quality and oil content of harvested soybeans \citep{gaur2018pests}. 
Although conventional chemical control can rapidly reduce pest populations in the short term, its long-term overuse has led to a series of problems, including increased pest resistance, decline of natural enemy populations, and imbalance in farmland ecosystems \citep{ali2023towards}. 
Therefore, developing green biological control technologies centered on natural enemy utilization has become an urgent requirement for modern soybean pest management systems \citep{maciel2022role}.

\textit{Trichogramma} spp. are obligate egg parasitoids of the soybean pod borer. 
They have distinct advantages including high host specificity, rapid reproduction, and safety to humans, livestock, and the environment \citep{consoli2010egg}. 
They are currently one of the most widely used biological control agents worldwide \citep{smith1996biological}. 
However, \textit{Trichogramma} release strategies in major soybean-producing regions of China still rely heavily on empirical judgment. 
There is a lack of quantitative scientific basis for key parameters such as release timing, release frequency, and single release dosage. 
This leads to highly variable field control efficacy and relatively high control costs, which severely restricts the large-scale application of this technology. 
The root cause of this dilemma lies in the limitations of existing research. 
No mathematical model has been established to accurately describe the full-generation developmental patterns of the soybean pod borer and its parasitic interaction mechanisms with \textit{Trichogramma}. 
As a result, the regulatory effects of different release strategies on pest populations cannot be revealed from a dynamical perspective.

Mathematical population dynamics models are core quantitative tools \citep{li2026dynamic,tan2026stochastic,liu2023discrete,wu2026rapid,xie2025tentative,peng2024predicting}. 
They reveal the evolutionary patterns of agricultural and forestry pest populations, quantify the interaction mechanisms between natural enemies and pests, and optimize biological control strategies. 
These models have been widely used in the prediction, forecasting, and control decision-making of various agricultural pests \citep{ibrahim2022expert,tonnang2017advances}.

\begin{figure}[t]
\centering
\includegraphics[width=\textwidth]{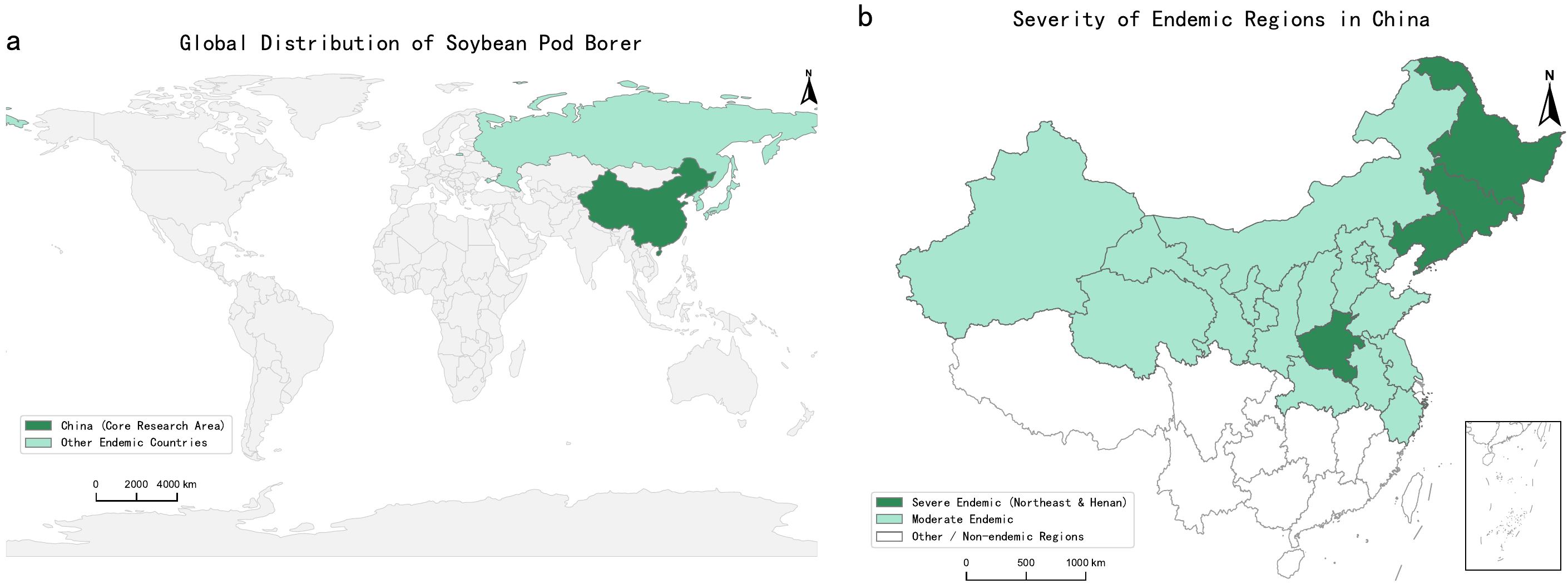}
\caption{\textbf{Geographical distribution and endemic severity of the soybean pod borer.}
 (\textbf{a}) Global distribution of the pest, highlighting China as the core research area (dark green) alongside other endemic countries (light green). 
 (\textbf{b}) Severity of endemic regions in China. Severe endemic regions, including Northeast China and Henan, are highlighted in dark green, while moderate endemic regions are shown in light green.
 The administrative boundaries are derived from the Natural Earth dataset (\url{naturalearthdata.com}) and the DataV.GeoAtlas platform (\url{datav.aliyun.com/portal/school/atlas/area_selector}).
\label{Figure1}}
\end{figure}   

Therefore, this study aims to answer two critical questions: 
\begin{itemize}
    \item What are the intrinsic baseline dynamics of soybean pod borer under natural intra-specific competition, and how do they dictate the intervention thresholds?
    \item How can we optimize the continuous and impulsive release rates of Trichogramma to balance pest suppression with economic costs?
\end{itemize}
To address these scientific gaps, we developed the EQPAL stage-structured impulsive differential equation model. 
This model describes the interaction between the soybean pod borer and \textit{Trichogramma}, covering four developmental stages: eggs, larvae, pupae, and adults.
It is the first model to accurately couple two key mechanisms: the quadratic density-dependent mortality of soybean pod borer larvae, and the obligate egg parasitism of \textit{Trichogramma}. 
We simulated three distinct control scenarios: no artificial release, continuous constant release, and periodic impulsive release.
Through rigorous mathematical derivations, we proved the existence and global asymptotic stability of system equilibria under each scenario. 
We also derived the critical threshold conditions for pest extinction and persistence.
Meanwhile, we used systematic field monitoring data of the soybean pod borer collected in Changchun in 2025. 
We applied the Markov Chain Monte Carlo (MCMC) method to invert and estimate the key biological parameters of the model.
We quantified the EIL of the soybean pod borer. 
Finally, we optimized the optimal \textit{Trichogramma} release strategy that balances control efficacy and economic cost.

\begin{figure}[t]
\centering
\includegraphics[width=\textwidth]{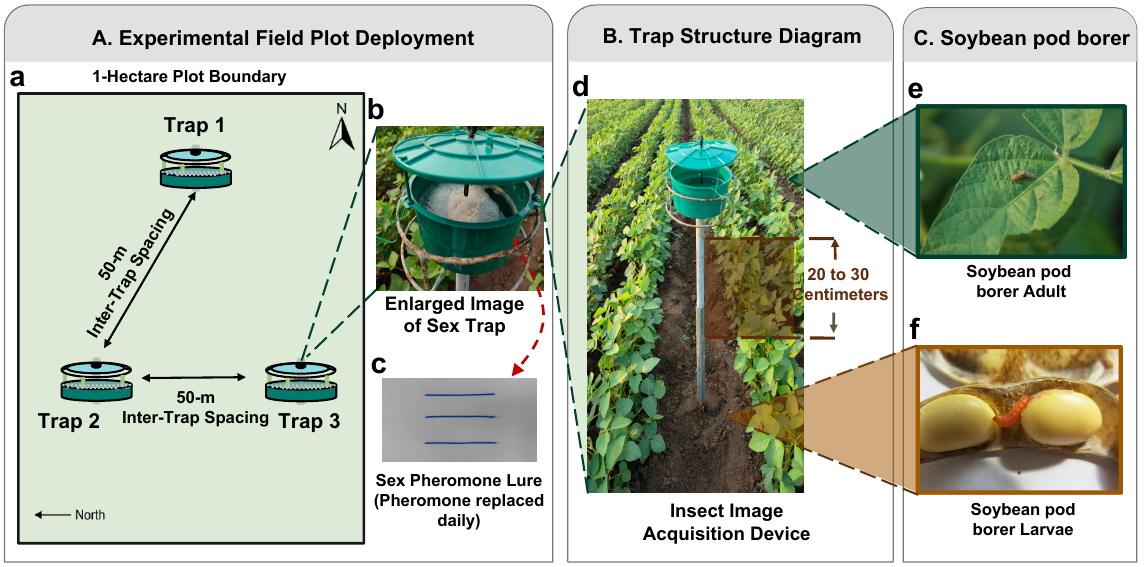}
\caption{\textbf{Field experimental setup and biological characteristics of the soybean pod borer.} 
    (\textbf{a}) Schematic diagram of the 1-hectare soybean field layout, indicating the 1-hectare boundary and the installation locations of three traps. The distance between adjacent traps is 50 \textit{m}. 
    (\textbf{b}) Enlarged photograph of a barrel-shaped sex pheromone trap. 
    (\textbf{c}) Photograph of the sex pheromone lure; the lure pheromone was replaced daily. B. Trap structure diagram 
    (\textbf{d}) Photograph of the sex pheromone trap device, showing the trap height set at 20 to 30 \textit{cm} above the soybean canopy. 
    (\textbf{e}) Close-up of an adult soybean pod borer resting on a soybean leaf. 
    (\textbf{f}) Close-up of a soybean pod borer larva inside an opened soybean pod.\label{Figure2}}
\end{figure}   

This study makes two key contributions. 
Theoretically, it improves the stage-structured impulsive dynamics theory for agricultural and forestry pest-parasitoid interaction systems. 
It also clarifies the intrinsic mechanisms by which different release strategies regulate pest populations.
Practically, it provides a directly applicable quantitative decision-making tool for the green and precise control of the soybean pod borer.
The rest of the paper is organized as follows. 
First, we describe the field data collection and preprocessing methods. Next, we present the construction and dynamical analysis of the EQPAL model. 
Then, we show the results of model parameter fitting and validation. We further compare the control efficacy of different release strategies. 
Finally, we summarize the main conclusions and outline future research directions.

\section{Materials and methods}\label{Sec2}
\subsection{Data collection and preprocessing}\label{Sec2.1}
We conducted field experiments in a 1-hectare continuous soybean plot at the Soybean Regional Technical Innovation Center of Jilin Agricultural University (Changchun, China; $125.42^\circ\text{E}, 43.82^\circ\text{N}$).
To capture the authentic population dynamics of soybean pod borer, the plot was maintained entirely free of chemical pesticides throughout the crop's growth period.

We performed daily monitoring of adult male moths from July 23 to August 27, 2025, completely covering the critical emergence and oviposition windows. We deployed three basin-type sex pheromone traps (Figure\ref{Figure2}(b)) with a minimum 50-m spatial interval to prevent pheromone plume interference. 
We filled the plastic basins with a diluted detergent solution to reduce surface tension for effective capture. We suspended sex pheromone lures centrally and replaced them once a day to maintain stable attractant efficacy.

We inspected the traps and counted captured moths daily at dawn. Because the pheromone lures exclusively attract male moths, the raw count data, denoted as $A_{trap}(t)$, represent only half of the population. Assuming a 1:1 natural sex ratio for soybean pod borer, we calculated the total daily adult emergence $A_{total}(t)$ as follows: 
\begin{figure}[h]
\centering
\includegraphics[width=0.65\textwidth]{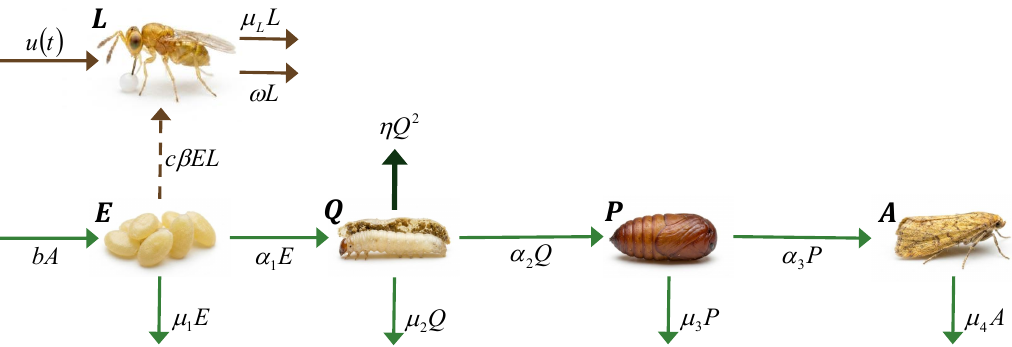}
\caption{Flow diagram for system \eqref{equation3}.\label{Figure3}}
\end{figure}   
\unskip
\begin{equation}\label{equation1}
A_{total}(t) = 2 \times A_{trap}(t)
\end{equation}
\subsection{Model and formulation}\label{Sec2.2}
To describe the interaction between the soybean pod borer and its obligate egg parasitoid, \textit{Trichogramma}, we developed a stage structured host-parasitoid dynamic model. 
The model incorporates the soybean pod borer life cycle (egg, larva, pupa, adult) of the borer and the specific egg-parasitism mechanism of the \textit{Trichogramma}. 
Our objective is to facilitate the quantitative evaluation of biological control efficacy using artificial \textit{Trichogramma} releases to manage soybean pod borer outbreaks.

To ensure the mathematical model strictly conforms to the realistic ecological principles of agricultural fields, the proposed model is formulated based on the following core assumptions:
\begin{enumerate}
    \item \textbf{Stage-structured population and complete metamorphosis}: 
    The soybean pod borer is a typical holometabolous insect \citep{chen2025role}. 
    Its life cycle is strictly divided into four sequential developmental stages: eggs $E(t)$, larvae $Q(t)$, pupae $P(t)$, and adults $A(t)$. 
    The population density of the natural enemy \textit{Trichogramma} is denoted by $L(t)$. 
    It is assumed that all individuals are uniformly distributed within their respective developmental stages.
    \item \textbf{Adult reproduction and non-overlapping resources}: 
    Unlike hemimetabolous insects, the primary ecological functions of adult pod borers are mating and oviposition. 
    They do not feed on soybean plant tissues and therefore do not participate in core resource competition within the agricultural field. 
    It is assumed that adults oviposit at a constant rate $b$, and follow a natural senescence process with a constant mortality rate $\mu_4$.
    \item \textbf{Intraspecific competition and density-dependent mortality in larvae}: 
    After hatching, larvae $Q(t)$ must bore into soybean pods to feed on the developing seeds. 
    Due to the absolute upper limit of pod numbers and spatial resources per soybean plant, an increase in larval density will lead to intense intraspecific competition. 
    Therefore, this model abandons the traditional system-wide Logistic growth assumption and precisely incorporates the density-dependence mechanism into the larval compartment. 
    A quadratic mortality term $-\eta Q^2$ is introduced to represent the additional mortality caused by resource depletion, where $\eta$ is the intraspecific competition coefficient.
    \item \textbf{Exclusive egg-parasitism mechanism}: 
    \textit{Trichogramma} $L(t)$ is an obligate egg parasitoid. 
    It is assumed that parasitoids randomly search for and parasitize soybean pod borer eggs $E(t)$ in the field, following the mass-action principle with a parasitism rate of $\beta EL$. 
    Parasitized eggs immediately cease development and hatch into the next generation of \textit{Trichogramma} under a conversion rate $c$. 
    Parasitoids have no direct impact on larvae, pupae, or adults.
     \item \textbf{Artificial biocontrol intervention}: 
     To simulate green biocontrol measures in practical agricultural production, it is assumed that \textit{Trichogramma} are artificially released into the field at a rate $u(t)$. 
     Simultaneously, parasitoids are affected by the natural environment, exhibiting natural mortality and emigration behaviors, with a combined elimination rate denoted as $\mu_L + \omega$. 
\end{enumerate}

The total number of pests at time $t$, denoted by $N(t)$, is given by:
\begin{equation}\label{equation2}
N(t)=E(t)+Q(t)+P(t)+A(t)+L(t).
\end{equation}

All state variables, physical meanings of biological parameters, values, and corresponding references involved in this model are detailed in Table \ref{Table1}.

\begin{table}[width=\linewidth,cols=5,pos=h]
\caption{State variables and parameter definitions for the EQPAL model.}\label{Table1}
\renewcommand{\arraystretch}{1.1}
\begin{tabular*}{\tblwidth}{@{} LLLLL@{} }
\toprule
\textbf{Symbol} & \textbf{Biological Definition} & \textbf{Value} & \textbf{Unit} & \textbf{Reference} \\
\midrule
\textbf{Variables} &&&&\\
$E(t)$ & Density of soybean pod borer eggs & 0.95 & Individuals/$m^2$ & Estimated \\
$Q(t)$ & Density of soybean pod borer larvae & 0 & Individuals/$m^2$ & Assumed \\
$P(t)$ & Density of soybean pod borer pupae & 0 & Individuals/$m^2$ & Assumed \\
$A(t)$ & Density of soybean pod borer adults & 0 & Individuals/$m^2$ & Assumed \\
$L(t)$ & Density of \textit{Trichogramma} & 0 & Individuals/$m^2$ & Assumed \\
\hline
\textbf{Parameters} &&&&\\
$b$ & Effective oviposition rate of adults & 0.01 & Day$^{-1}$ & Assumed \\
$\alpha_1$ & Hatching rate from eggs to larvae & 0.20 & Day$^{-1}$ & \citeyear{cea2015website} \\
$\alpha_2$ & Pupation rate from larvae to pupae & 0.08 & Day$^{-1}$ & \citeyear{shenyang2022website} \\
$\alpha_3$ & Emergence rate from pupae to adults & 0.12 & Day$^{-1}$ & \citeyear{foodmate2006website} \\
$\mu_1$ &  Mortality rate of eggs & 0.05 & Day$^{-1}$ & \citeyear{cea2015website} \\
$\mu_2$ &  Mortality rate of larvae & 0.02 & Day$^{-1}$ & \citep{hsu1965study} \\
$\mu_3$ &  Mortality rate of pupae & 0.02 & Day$^{-1}$ & \citeyear{shenyang2022website} \\
$\mu_4$ & Mortality rate of adults & 0.25;0.05;0.40 & Day$^{-1}$ & Estimated \\
$\eta$ & Intraspecific competition coefficient & 0.001 & (Ind/$m^2$)$^{-1}$ Day$^{-1}$ & Assumed \\
$\beta$ & Parasitism search rate & 1.2 & (Ind/$m^2$)$^{-1}$ Day$^{-1}$ & Assumed \\
$c$ & Conversion rate of parasitized eggs & 1.0 & Dimensionless & Assumed \\
$\mu_L$ &  Mortality rate of \textit{Trichogramma} & 0.25 & Day$^{-1}$ & \citep{gurr2000effect} \\
$\omega$ & Emigration rate of \textit{Trichogramma} & 0.05 & Day$^{-1}$ & Assumed \\
$u(t)$ & Artificial release rate function & C & Individuals/$m^2 $Day$^{-1}$ & Assumed \\
\bottomrule
\end{tabular*}
\end{table}

Based on the aforementioned biological assumptions, the population dynamics of the soybean pod borer and \textit{Trichogramma} interacting system are governed by the following nonlinear ordinary differential equations:
\begin{equation}
\label{equation3}
\begin{cases}
\frac{\mathrm{d}E}{\mathrm{d}t} = bA - (\alpha_1 + \mu_1)E - \beta EL \\[2pt]
\frac{\mathrm{d}Q}{\mathrm{d}t} = \alpha_1 E - (\alpha_2 + \mu_2)Q - \eta Q^2 \\[2pt]
\frac{\mathrm{d}P}{\mathrm{d}t} = \alpha_2 Q - (\alpha_3 + \mu_3)P \\[2pt]
\frac{\mathrm{d}A}{\mathrm{d}t} = \alpha_3 P - \mu_4 A \\[2pt]
\frac{\mathrm{d}L}{\mathrm{d}t} = c\beta EL - (\mu_L + \omega)L + u(t)
\end{cases}
\end{equation}
The initial conditions of this system satisfy \(E(0) \ge 0\), \(Q(0) \ge 0\), \(P(0) \ge 0\), \(A(0) \ge 0\) and \(L(0) \ge 0\). Let \(N(t) = E(t) + Q(t) + P(t) + A(t) + L(t)\) represent the total population size of the system at time $t$. 
Figure \ref{Figure3} depicts the compartmental transition flows of the proposed model.

\section{Results}\label{Sec3}
\subsection{Model analysis}\label{Sec3.1}
To further investigate the biocontrol potential of \textit{Trichogramma} against the soybean pod borer, this section systematically analyzes the dynamical behaviors of the proposed EQPAL model. 
Specifically, we explore three distinct scenarios: no artificial release, a continuous constant release strategy, and an impulsive periodic release strategy.
\subsubsection{Dynamical behaviors without artificial release ($u(t) = 0$)}\label{Sec3.1.1}
In the absence of continuous artificial releases of natural enemies in the agricultural environment, the interactive system between the soybean pod borer and \textit{Trichogramma} degenerates into an autonomous dynamical system. 
In this case, system \eqref{equation3} simplifies to the following system:
\begin{equation}
\label{equation4}
\begin{cases}
\frac{\mathrm{d}E}{\mathrm{d}t} = bA - (\alpha_1 + \mu_1)E - \beta EL \\[2pt]
\frac{\mathrm{d}Q}{\mathrm{d}t} = \alpha_1 E - (\alpha_2 + \mu_2)Q - \eta Q^2 \\[2pt]
\frac{\mathrm{d}P}{\mathrm{d}t} = \alpha_2 Q - (\alpha_3 + \mu_3)P \\[2pt]
\frac{\mathrm{d}A}{\mathrm{d}t} = \alpha_3 P - \mu_4 A \\[2pt]
\frac{\mathrm{d}L}{\mathrm{d}t} = c\beta EL - (\mu_L + \omega)L
\end{cases}
\end{equation}

This system describes the natural reproduction of the pod borer under the limitation of soybean pod resources, as well as the natural parasitism process carried out by the wild \textit{Trichogramma} population. 
Table \ref{Table2} summarizes the core definitions, lemmas, and theorems to be proven in this subsection, along with their corresponding biological interpretations. 
All mathematical proof processes are detailed in Appendix \ref{appA}.

\begin{table}[width=\linewidth,cols=4,pos=h]
\caption{Core lemmas and theorems of the model dynamic analysis without artificial release.}\label{Table2}
\begin{tabular*}{\tblwidth}{@{} LLLL@{} }
\toprule
 Category & Identifier & Core Conclusion & Appendix Reference\\
\midrule
Theorem & Theorem \ref{th1} & Non-negativity and boundedness of the model’s state variables & Appendix \ref{app1} \\
Theorem & Theorem \ref{th2} & Existence and calculation of the Parasitoid-Free Equilibrium $E_0$ & Appendix \ref{app2} \\
Lemma & Lemma \ref{le3} & Derivation of the basic reproduction number $\mathcal{R}_0$ & Appendix \ref{app3} \\
Definition & Definition \ref{de1} & Local asymptotic stability of the Parasitoid-Free Equilibrium $E_0$ & Appendix \ref{app4} \\
Theorem & Theorem \ref{th4} & Global asymptotic stability of the Parasitoid-Free Equilibrium $E_0$ & Appendix \ref{app5} \\
Lemma & Lemma \ref{le5} & Existence and uniqueness of the Coexistence Equilibrium $E^*$ & Appendix \ref{app6} \\
Theorem & Theorem \ref{th6} & Global asymptotic stability of the Coexistence Equilibrium $E^*$ & Appendix \ref{app7} \\
\bottomrule
\end{tabular*}
\end{table}
\begin{table}[width=\linewidth,cols=4,pos=h]
\caption{Core lemmas and theorems for the model with continuous constant release.}\label{Table3}
\begin{tabular*}{\tblwidth}{@{} LLLL@{} }
\toprule
 Category & Identifier & Core Conclusion & Appendix Reference\\
\midrule
Lemma & Lemma \ref{le7} & Non-existence of the parasitoid-free equilibrium & Appendix \ref{app8} \\
Lemma & Lemma \ref{le8} & Existence and uniqueness of the coexistence equilibrium $E^*_C$ & Appendix \ref{app9} \\
Theorem & Theorem \ref{th9} & Global asymptotic stability of the coexistence equilibrium $E^*_C$ & Appendix \ref{app10} \\
Theorem & Theorem \ref{th10} & Monotonicity of pest suppression with respect to release rate $C$ & Appendix \ref{app11} \\
\bottomrule
\end{tabular*}
\end{table}
\begin{table}[width=\linewidth,cols=4,pos=h]
\caption{Core lemmas and theorems for the model with periodic impulsive release.}\label{Table4}
\begin{tabular*}{\tblwidth}{@{} LLLL@{} }
\toprule
Category & Identifier & Core Conclusion & Appendix Reference\\
\midrule
Lemma & Lemma \ref{le11} & Existence and stability of the pest-free periodic solution & Appendix \ref{app12} \\
Theorem & Theorem \ref{th12} & Global stability of the pest-free periodic solution & Appendix \ref{app13} \\
Theorem & Theorem \ref{th13} & Threshold condition for the permanence of the pest population & Appendix \ref{app14} \\ 
\bottomrule
\end{tabular*}
\end{table}

\subsubsection{Dynamical behaviors under continuous constant release strategy ($u(t) \equiv C$)}\label{Sec3.1.2}

In practical biological control, continuous release of natural enemies is a prevalent strategy to maintain constant predatory pressure and suppress pest populations below the economic injury level \citep{van2000success}. 
In this subsection, we assume the artificial release rate is a constant $C>0$. 
Consequently, the autonomous system \eqref{equation3} is extended to the following non-homogeneous system:
\begin{equation}\label{equation5}
\begin{cases} 
\frac{\mathrm{d}E}{\mathrm{d}t}=bA-(\alpha_{1}+\mu_{1})E-\beta EL \\[2pt]
\frac{\mathrm{d}Q}{\mathrm{d}t}=\alpha_{1}E-(\alpha_{2}+\mu_{2})Q-\eta Q^{2} \\[2pt]
\frac{\mathrm{d}P}{\mathrm{d}t}=\alpha_{2}Q-(\alpha_{3}+\mu_{3})P \\[2pt]
\frac{\mathrm{d}A}{\mathrm{d}t}=\alpha_{3}P-\mu_{4}A \\[2pt]
\frac{\mathrm{d}L}{\mathrm{d}t}=c\beta EL-(\mu_{L}+\omega)L + C 
\end{cases}
\end{equation}


Unlike the natural system, the presence of the constant term $C$ ensures the persistence of the parasitoid population even when the prey egg density is extremely low. 
Table \ref{Table3} summarizes the core lemmas and theorems for this scenario, and the detailed mathematical proof processes are provided in Appendix \ref{appB}.

\subsubsection{Dynamical behaviors under periodic impulsive release strategy}\label{Sec3.1.3}

In practical agricultural management, the continuous release of natural enemies is often cost-prohibitive and technically challenging to implement in large-scale soybean fields \citep{martinez2025baculoviruses}. 
A more feasible and widely adopted approach is the periodic impulsive release strategy, where a specific dosage of \textit{Trichogramma} is released at discrete, fixed time intervals. 
To characterize this non-continuous biocontrol intervention, we extend the autonomous system \eqref{equation3} into the following impulsive stage-structured differential system:

\begin{equation}\label{equation6}
\begin{cases} 
\begin{aligned}
&\left. 
\begin{aligned}
\frac{\mathrm{d}E}{\mathrm{d}t} &= bA - (\alpha_1 + \mu_1)E - \beta EL \\[1.5pt]
\frac{\mathrm{d}Q}{\mathrm{d}t} &= \alpha_1 E - (\alpha_2 + \mu_2)Q - \eta Q^2 \\[1.5pt]
\frac{\mathrm{d}P}{\mathrm{d}t} &= \alpha_2 Q - (\alpha_3 + \mu_3)P \\[1.5pt]
\frac{\mathrm{d}A}{\mathrm{d}t} &= \alpha_3 P - \mu_4 A \\[1.5pt]
\frac{\mathrm{d}L}{\mathrm{d}t} &= c\beta EL - (\mu_L + \omega)L
\end{aligned} \right\} t \neq nT, \\[3pt]
&\left. \Delta L(t) = \delta, \quad \Delta E = \Delta Q = \Delta P = \Delta A = 0 \right\} t = nT,
\end{aligned}
\end{cases}
\end{equation}

where $T > 0$ denotes the constant release period, and $\delta > 0$ represents the impulsive release dosage of \textit{Trichogramma} at each release time $nT$ ($n \in \mathbb{N}$). 
The state variable $L(t^+)$ signifies the instantaneous density of the parasitoid population immediately after the $n$-th impulsive release event.

The primary objective of this subsection is to establish the threshold conditions under which the soybean pod borer population can be eradicated or effectively suppressed below the EIL using this intermittent strategy. 
Unlike the previous autonomous or non-homogeneous systems, system \eqref{equation6} exhibits complex impulsive periodic oscillations. 
Table \ref{Table4} summarizes the core definitions, lemmas, and stability theorems for the impulsive periodic solution, with comprehensive mathematical proofs provided in Appendix \ref{appC}.

\subsection{Model fitting}\label{Sec3.2}
\subsubsection{Field data preprocessing}
\begin{figure}[t]
\centering
\includegraphics[width=0.9\linewidth]{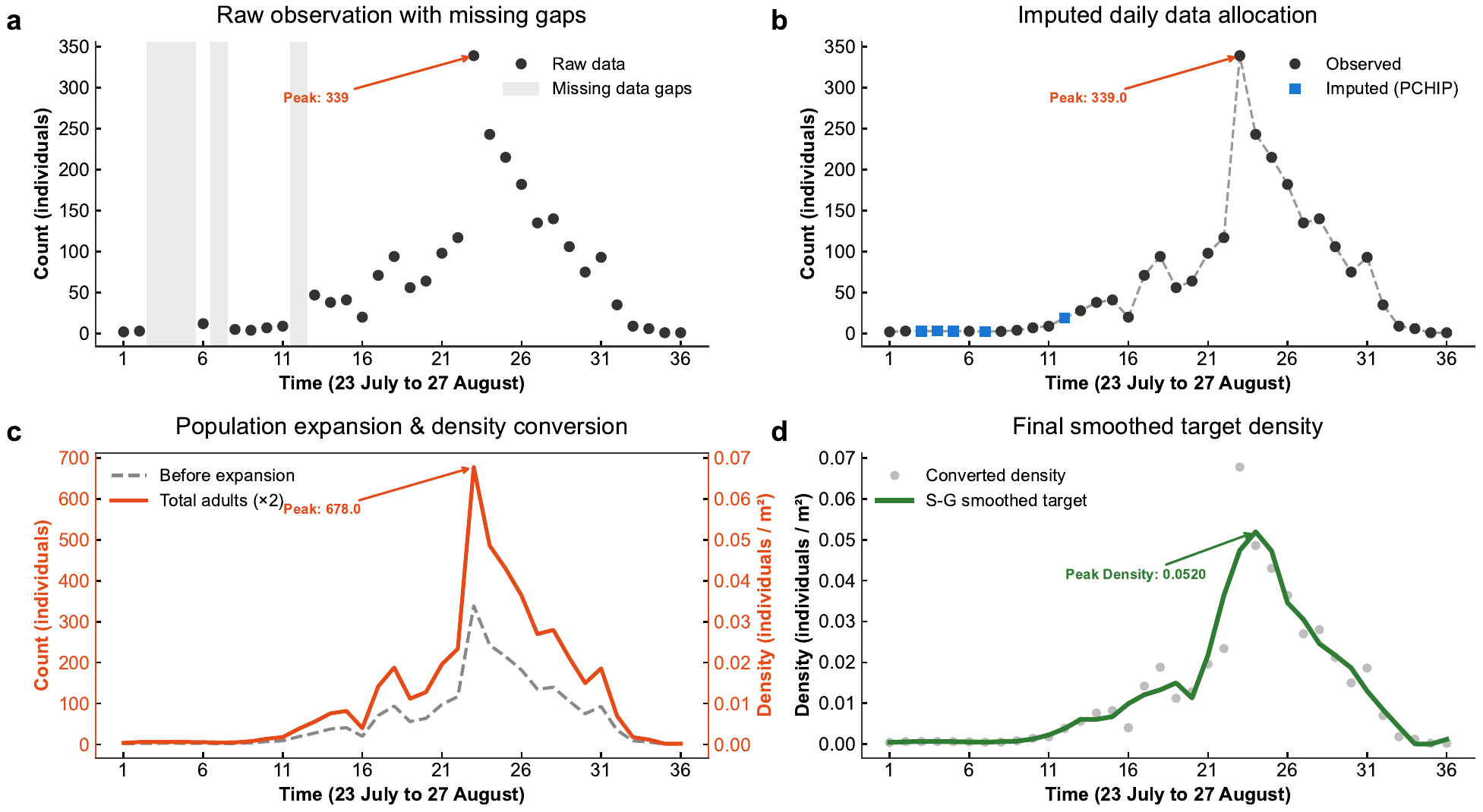}
\caption{\textbf{Data preprocessing of the field observations.} 
    (\textbf{a}) \textbf{Raw observation with missing gaps}. Raw daily catch data of male soybean pod borer adults, with shaded areas indicating missing observation gaps.
    (\textbf{b}) \textbf{Imputed daily data allocation}. Imputed daily data allocation using a Piecewise Cubic Hermite Interpolating Polynomial (PCHIP). 
    (\textbf{c}) \textbf{Population expansion \& density conversion}. Pest expansion and density conversion. 
    (\textbf{d}) \textbf{Final smoothed target density}. Final smoothed target density processed via a Savitzky-Golay filter. \label{Figure4}}
\end{figure}   
\unskip

Prior to the Markov Chain Monte Carlo (MCMC) parameter estimation, we processed the raw field monitoring data through a rigorous mathematical transformation pipeline.
As illustrated in Figure \ref{Figure4}(a), while the monitoring period spanned from July 23 to August 27, 2025, unavoidable field constraints resulted in several observational gaps. 
Consequently, certain records represented the cumulative catch over multiple days rather than daily increments. 
To satisfy the requirement for temporal continuity and conform to the biological principles of natural population emergence, we employed a Piecewise Cubic Hermite Interpolating Polynomial (PCHIP) \citep{fritsch1980monotone} to reconstruct the cumulative capture trajectory (Figure \ref{Figure4}(b)). 
Let $C(t_k)$ denote the cumulative catch recorded on observation day $t_k$; the PCHIP algorithm constructs a cubic polynomial $H(t)$ that preserves local monotonicity. 
The continuous daily increment $D(t)$ can then be derived as $D(t) = H(t) - H(t-1)$. 
This method effectively imputes missing values while maintaining the authentic trend of monotonic population growth and avoiding artificial numerical oscillations common in traditional interpolation techniques.

Given that sex pheromone traps specifically attract only male adults \citep{rizvi2021latest}, we adjusted the single-sex capture data based on the biological evidence of a 1:1 natural sex ratio insoybean pod borer populations. 
We converted the absolute daily increments into field population densities compatible with the experimental plot size via the following expression:

\begin{equation}\label{equation7}
N(t) = \frac{2 \times D(t)}{S}
\end{equation}
where $N(t)$ represents the total adult emergence density at time $t$ (individuals/m$^2$), and $S = 10,000$ m$^2$ signifies the total area of the fixed experimental plot. 
This transformation ensures that the observational data are strictly aligned with the physical dimensions and scale of the stage-structured model (Figure \ref{Figure4}(c)). 

Finally, to mitigate the impact of high-frequency environmental noise which can severely hinder the convergence of parameter inference, we applied a Savitzky-Golay filter \citep{savitzky1964smoothing} to the density time series (Figure \ref{Figure4}(d)).
The filter performs a sliding-window convolution based on a local polynomial least-squares fit. 
The smoothed density values $\hat{N}(t)$ are calculated as:
\begin{equation}\label{equation8}
\hat{N}(t) = \sum_{j=-m}^{m} c_j N(t+j)
\end{equation}
where $c_j$ represents the convolution coefficients determined by a third-order polynomial fit within the sliding window.  
The resulting smoothed target density curve serves as the empirical benchmark for the subsequent estimation of unknown biological parameters within the inversion model.

\begin{figure}[t]
\centering
\includegraphics[width=\linewidth]{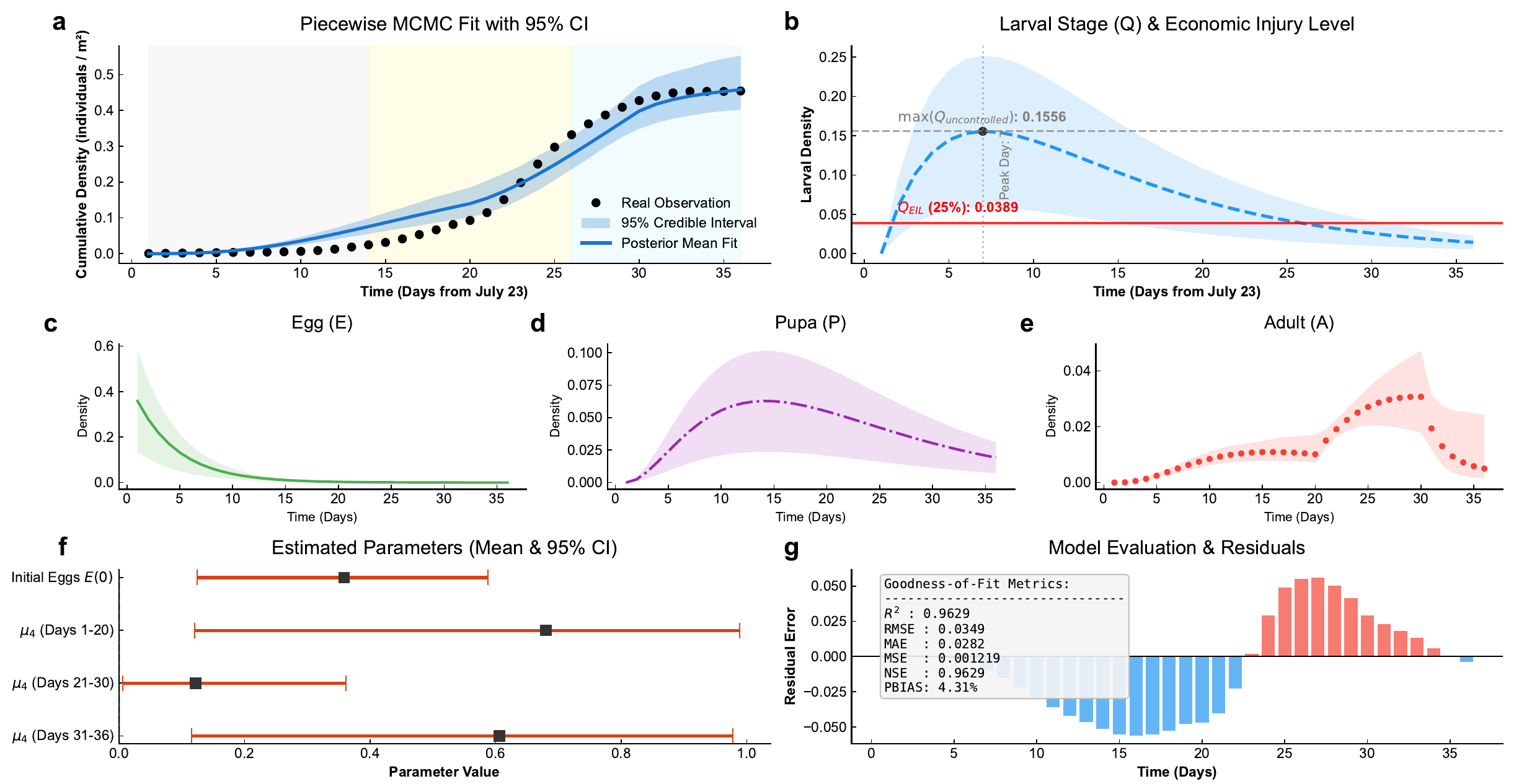}
\caption{\textbf{Comprehensive parameter inference, stage-structured dynamics, and Economic Injury Level (EIL) evaluation dashboard for the piecewise EQPAL system.} 
    (\textbf{a}) Posterior predictive check of the cumulative adult emergence. 
    Black dots denote field observations, the solid blue line represents the posterior mean trajectory, and the shaded area indicates the 95\% credible interval (CI) capturing observational uncertainty. 
    (\textbf{b}) Simulated dynamics of the larval ($Q$) and the establishment of the EIL. 
    The dashed gray line denotes the maximum uncontrolled larval density ($\max(Q_{uncontrolled})$). 
    The solid red line represents the absolute $Q_{EIL}$ threshold, strictly set at 25\% of the uncontrolled peak to align with the macroscopic agricultural policy target of restricting crop yield loss to 5\%. 
    (\textbf{c}-\textbf{e}) Sequential population dynamics of the Egg ($E$), Pupa ($P$), and Adult ($A$). 
    (\textbf{f}) Forest plot of the estimated key parameters. 
    The initial egg density ($E_0$) and time-varying adult mortality rates ($\mu_4$) are displayed with their posterior means (squares) and 95\% CIs (error bars). 
    (\textbf{g}) Model performance metrics and residual distribution. 
    The bar chart illustrates the daily residuals, complemented by goodness-of-fit metrics ($R^2$, RMSE, MAE, and MSE) that corroborate the robustness of the MCMC inversion.}
    \label{Figure5}
\end{figure}   
\unskip

\subsubsection{MCMC method fitting}

To calibrate the dynamic model, we employed the Markov Chain Monte Carlo (MCMC) method to fit the preprocessed adult emergence data \citep{metropolis1953equation,hastings1970monte}.
The posterior prediction test (Figure \ref{Figure5}(a)) shows that the field observation results are in high agreement with the simulated trajectory, and most of the data are within the 95\% confidence interval (CI).
Given the transient environmental factors during the crop season, we estimated the adult natural mortality rate $(\mu_4)$ piecewise across three temporal windows (Days 1-20, 21-30, and 31-36), alongside the initial egg density $E(0)$. 
The posterior means and 95\% CIs of these estimated parameters are summarized in Figure \ref{Figure5}(f).
The robustness of the parameter inversion was verified through residual analysis and goodness-of-fit evaluation (Figure \ref{Figure5}(g)). 
The model achieved a high coefficient of determination ($R^2 = 0.9599$) and low error metrics (RMSE = 0.0363, MAE = 0.0297), indicating that the parameterized EQPAL system accurately captures the authentic outbreak dynamics of soybean pod borer.

Utilizing the fitted parameters, we reconstructed the continuous population dynamics for all biological stages (Figures \ref{Figure5}(b-e)). 
The chronological succession of density peaks across eggs (E), pupae (P), and adults (A) mathematically validates the developmental time delays inherent in the holometabolous life cycle. 
Crucially, as the larvae (Q) constitute the sole destructive stage boring into soybean pods, we quantified their uncontrolled outbreak peak at $\max(Q_{uncontrolled}) = 0.1556$ individuals/$m^2$ (Figure \ref{Figure5}(b)).

In accordance with official agricultural management guidelines which aim to restrict crop yield loss to 5\%, we strictly established the Economic Injury Level $(Q_{EIL})$ at 25\% of the maximum uncontrolled larval density \citep{shaanxi2025soybean}.
 target threshold for all subsequent biocontrol interventions was set to $Q_{EIL} = 0.0389$ individuals/$m^2$ (Figure \ref{Figure5}(b), solid red line).

\begin{figure}[t]
\centering
\includegraphics[width=0.8\linewidth]{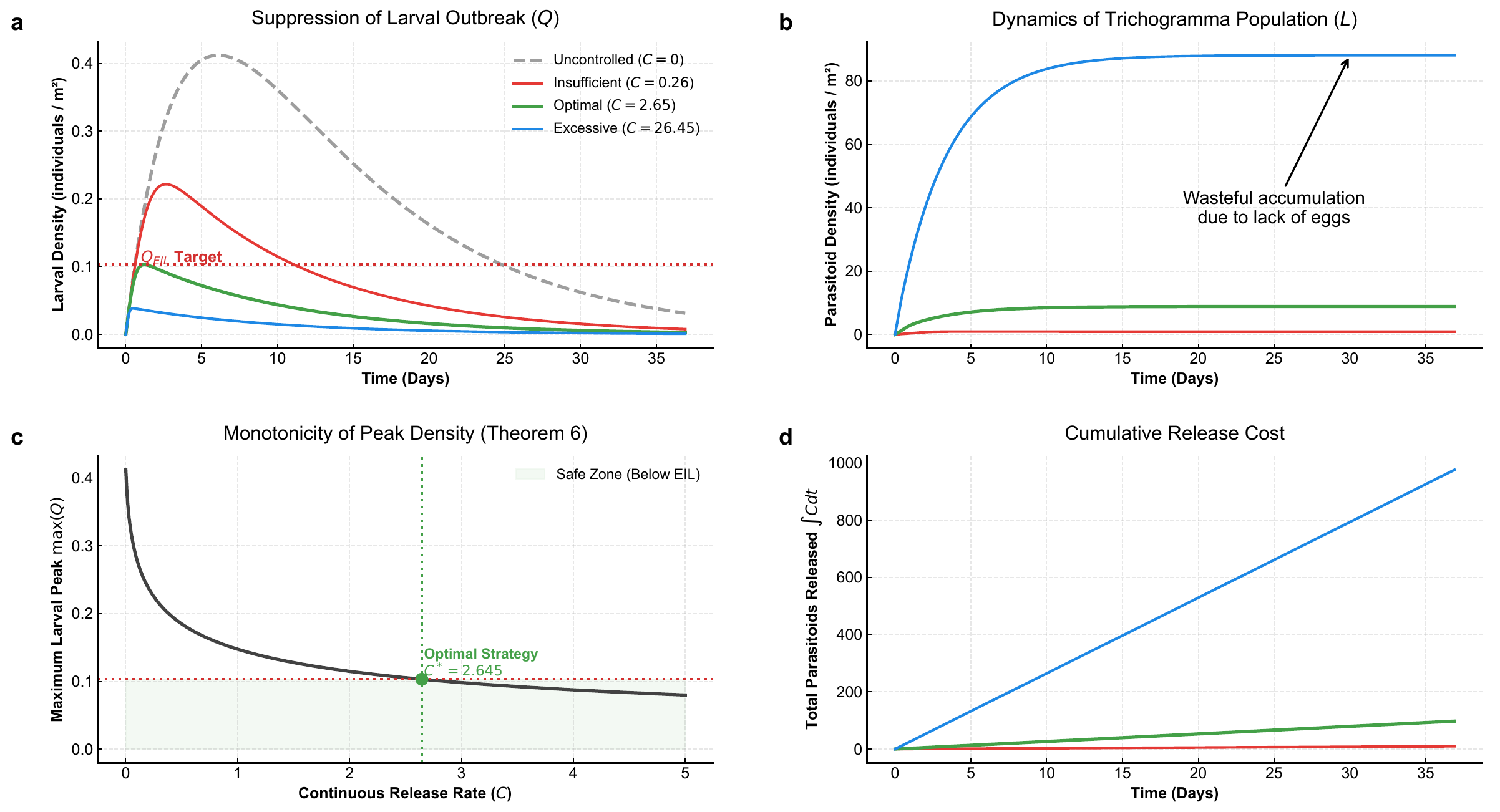}
\caption{
\textbf{Efficacy and economic cost assessment of continuous \textit{Trichogramma} release strategies based on the EQPAL model.} 
    (\textbf{a}) Temporal dynamics of larval density ($Q$) under various continuous release rates ($C$).
    (\textbf{b}) Simulated population trajectories of released\textit{Trichogramma} ($L$).  
    (\textbf{c})  Validation of the monotonic relationship between the release rate ($C$) and the maximum larval peak, identifying the optimal release rate ($C^*$) required to maintain pest populations below the $Q_{EIL}$ threshold. 
    (\textbf{d}) Cumulative economic cost of different continuous release strategies.
}
    \label{Figure6}
\end{figure}

\subsection{Different natural enemy release strategies}
Building upon the parameterized dynamic system, we systematically evaluated continuous and periodic impulsive release strategies of \textit{Trichogramma} wasps to determine the optimal biological control intervention. 

Under the theoretical continuous release scenario (Figure \ref{Figure6}), the maximum larval peak $\max(Q)$ decreases monotonically as the release rate $C$ increases, which aligns with the theoretical prediction of Theorem 6 (Figure \ref{Figure6}(c)). 
However, releasing an excessive number of wasps ($C = 26.45$) yields severely diminishing returns; once host eggs are depleted, the surplus live parasitoids suffer from wasteful accumulation followed by natural mortality and emigration (Figure \ref{Figure6}(b)). To balance pest suppression efficacy with the cumulative cost of rearing and releasing the wasps (Figure \ref{Figure6}(d)), the optimal continuous release rate is strictly determined as $C^* = 2.645$. 
At this continuous rate, the larval outbreak is precisely constrained to the $Q_{EIL}$ target (0.0389 individuals/$m^2$).

Given the logistical constraints and high labor or drone-flight costs associated with daily continuous releases in large-scale agriculture, we further evaluated the practical periodic impulsive release strategy (Figure \ref{Figure7}). 
To isolate the impact of release frequency, the equivalent total number of parasitoids was conserved across different intervals ($T$) by setting the impulsive release volume to $\delta = C^* \times T$. 
The simulations expose a critical ecological vulnerability driven by the release interval. When the interval is extended to $T = 7$ days ($\delta = 18.5$), the short natural lifespan of \textit{Trichogramma} ($\mu_L = 0.25$, yielding an average lifespan of approximately 4 days) creates a parasitoid-free window between consecutive field releases (Figure \ref{Figure7}(b)). 
This temporal gap allows pest eggs to escape parasitism, causing the larval peak to significantly breach the $Q_{EIL}$ threshold (Figure \ref{Figure7}(a)). 
Conversely, while a short interval of $T = 3$ days ($\delta = 7.9$) provides robust pest suppression, it substantially multiplies the frequency of field release operations. 
Consequently, a release interval of $T = 5$ days ($\delta = 13.2$ wasps per m² per release) emerges as the optimal practical strategy. 
As demonstrated in Figure \ref{Figure7}(c), this specific frequency perfectly maintains the pest population within the safe zone while minimizing the operational frequency, offering a robust quantitative guideline for the biological field management of the soybean pod borer.

\section{Discussion}
The development of effective, ecologically sustainable pest management strategies requires a rigorous understanding of the complex population dynamics between target pests and their natural enemies \citep{fischbein2022population}. 
In this study, we developed a novel stage-structured impulsive differential equation model to quantify the interaction between the soybean pod borer and its obligate egg parasitoid, \textit{Trichogramma} spp.. 
By integrating field monitoring data with Markov Chain Monte Carlo (MCMC) parameter estimation, we bridged the gap between theoretical mathematical ecology and applied agricultural entomology, providing a robust quantitative framework for optimizing biological control interventions.  
A primary methodological strength of our approach lies in the explicit stage-structured formulation coupled with stage-specific density dependence. 
Traditional population models often utilize simple Logistic \citep{pearl1920rate} or Lotka-Volterra \citep{volterra1928variations} models that treat pest populations as a single, homogenous compartment. 
However, holometabolous insects exhibit distinct ecological roles across their developmental stages. 
In our model, the integration of a quadratic density-dependent mortality term ($-\eta Q^2$) specifically within the larval compartment accurately captures the intense intraspecific competition for limited spatial and nutritional resources within soybean pods. 
This refinement prevents the unrealistic exponential population explosions seen in simpler models and allows for a highly precise establishment of the Economic Injury Level ($Q_{EIL}$) based solely on the destructive larval stage rather than the harmless adult stage.  

Furthermore, mathematical models are frequently criticized for relying on arbitrary or literature-derived data values that may not reflect local ecological contexts. 
To overcome this, our study was grounded in high-resolution field data collected from Changchun, China. 
The application of the MCMC algorithm enabled the precise inversion of key biological parameters. 
Notably, the piecewise estimation of the adult natural mortality rate ($\mu_4$) provided a mechanistic reflection of the transient environmental pressures and physiological senescence occurring during the crop season. 
The resulting fit metrics ($R^2 = 0.9599$) confirm that our parameterized system reliably mirrors authentic field outbreak dynamics.  

\begin{figure}[t]
\centering
\includegraphics[width=0.8\linewidth]{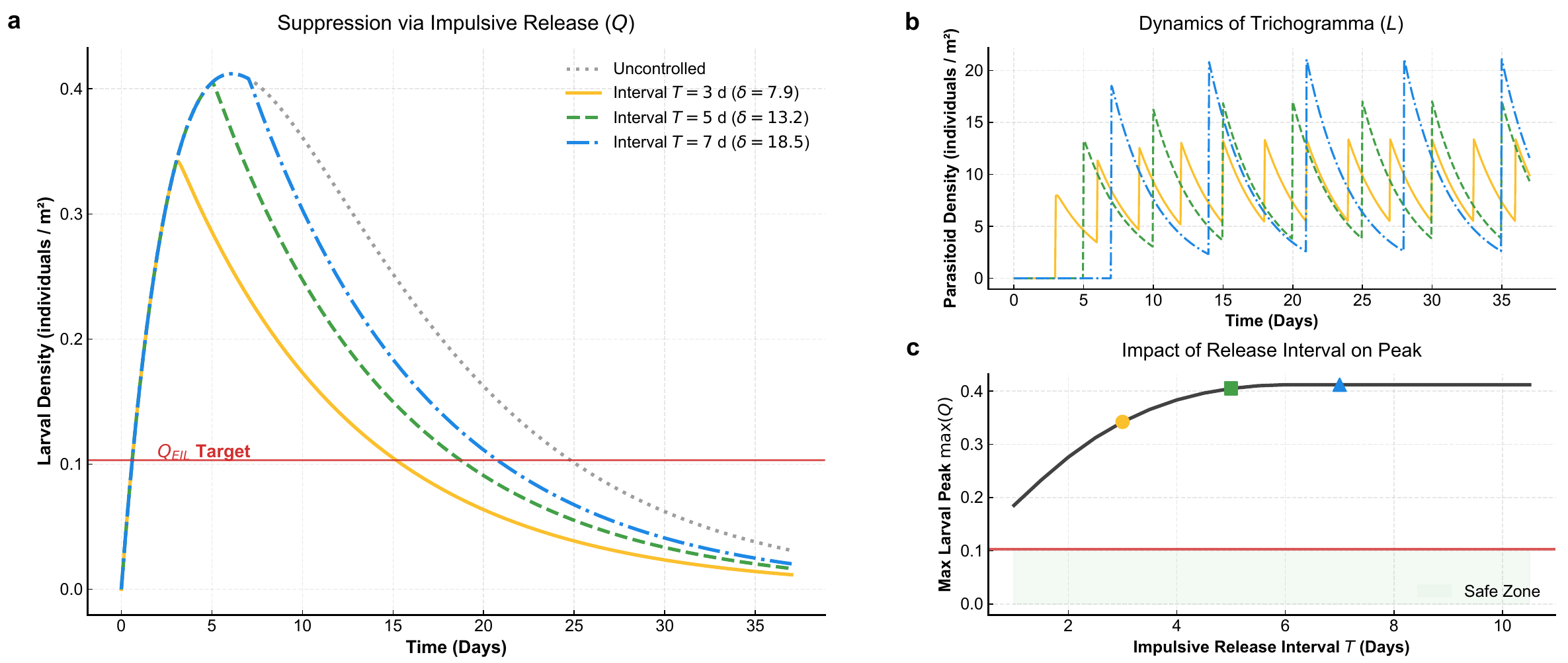}
\caption{
\textbf{Evaluation of periodic impulsive \textit{Trichogramma} release strategies and release interval optimization based on the EQPAL model.} 
    (\textbf{a}) Suppression of larval outbreak dynamics ($Q$) under periodic impulsive releases with varying intervals ($T = 3, 5,$ and $7$ days). 
    The equivalent total release dosage is maintained across different strategies to evaluate the sole impact of release frequency. 
    Distinct line styles and colors differentiate the release intervals, illustrating their respective capacities to constrain the larval population below the strict Economic Injury Level ($Q_{EIL}$ target, solid red line).
    (\textbf{b})  Impulsive population dynamics of released $L$. 
    (\textbf{c}) Impact of the impulsive release interval ($T$) on the maximum larval peak $\max(Q)$.
}
    \label{Figure7}
\end{figure}

The evaluation of continuous release strategies provided profound theoretical insights into the efficiency of biological control. 
While our mathematical proof guarantees a monotonic decrease in the maximum larval peak with increasing release rates, our numerical simulations exposed the economic and ecological paradox of excessive intervention. 
Specifically, when the release rate was excessively high ($C = 26.45$), the rapid depletion of host eggs led to a wasteful accumulation of surplus live parasitoids, which subsequently succumbed to natural mortality and emigration. 
This finding highlights the fallacy of "more is better" in biological control \citep{kriticos2003roles} and underscores the necessity of strictly defining an optimal release threshold ($C^* = 2.645$) to perfectly balance pest suppression with economic viability.  

Translating continuous theoretical models into field reality requires accounting for discrete operations, as daily manual or mechanical release of live insects is economically prohibitive. 
Our analysis of the periodic impulsive release strategy uncovered a critical ecological vulnerability governed by the release interval. Because \textit{Trichogramma} possesses a relatively short natural lifespan, extending the release interval to $T = 7$ days created a "parasitoid-free window". 
During this temporal gap, pest eggs successfully escaped parasitism, causing the destructive larval population to rebound and breach the safety threshold. 
This dynamic is mathematically formalized by our impulsive threshold condition ($\mathcal{R}_{imp}$), proving that in augmentative biological control, the temporal frequency of release is as critical as the total volume of parasitoids deployed.  

From a practical agricultural management perspective, our optimization framework identifies a release interval of $T = 5$ days ($\delta = 13.2$) as the most viable strategy. 
This specific regimen effectively suppresses the pest population within the safe zone while minimizing the operational footprint, thereby offering a direct, actionable guideline for modern Integrated Pest Management (IPM) systems \citep{pradhan2025integrated}.  
Despite its robust findings, this study has certain limitations that pave the way for future research. First, the current ODE-based framework assumes spatial homogeneity, ignoring the dispersal and diffusion dynamics of both the moths and the parasitoids across fragmented agricultural landscapes. 
Future models could incorporate reaction-diffusion equations (PDEs) to evaluate spatially targeted release strategies \citep{ford2015derivation}. 
Second, abiotic environmental covariates such as temperature, precipitation, and wind speed, which significantly influence insect phenology and flight behavior \citep{vebrova2018seasonality}, were implicitly captured through piecewise parameter fitting rather than explicit mechanistic equations. 
Integrating climate-driven thermodynamic sub-models \citep{parasar2025explainable} could further enhance the predictive power of the system under climate change scenarios. 
Finally, expanding the model to consider the sublethal effects of biopesticides combined with \textit{Trichogramma} releases could provide a more comprehensive view of integrated management strategies in commercial soybean production.

\section{Conclusion}
This study establishes a robust, data-driven mathematical framework for optimizing \textit{Trichogramma} release strategies against the soybean pod borer. 
By developing a novel stage-structured dynamic model that explicitly couples larval density dependence with obligate egg parasitism, and rigorously parameterizing it using field monitoring data via the MCMC method, we accurately quantified the pest's biological trajectory and established a precise larval Economic Injury Level ($Q_{EIL}$). 
Our theoretical and numerical analyses reveal that while a continuous release rate of $C^* = 2.645$ optimally suppresses the pest theoretically, it is operationally demanding. 
Crucially, the evaluation of periodic impulsive releases demonstrates that the temporal frequency of interventions is strictly bottlenecked by the short natural lifespan of the parasitoids. 
Extending the release interval to 7 days inevitably leads to control failure due to the emergence of a parasitoid-free window. 
We conclude that a 5-day impulsive release interval provides the optimal biological control strategy, ensuring strict pest suppression within the safe zone while minimizing field operational costs. 
Ultimately, this quantitative decision-making tool bridges the gap between theoretical population dynamics and applied agricultural management, providing scientifically viable guidelines for the green and precise control of soybean pests.

\printcredits

\section*{Data availability statement}

All pest data and Python source code in the article can be obtained from the following repositories: \url{https://github.com/jluLWX/Data-and-code-of-soybean-aphid} (accessed on 20 July 2026).

\section*{Declaration of competing interest}

The authors declare that we have no known competing financial interests or personal relationships that could have appeared to influence the work reported in this paper.

\section*{Acknowledgements}

This research was funded by the National Natural Science Foundation of China (Grant No. 12301627 for Suli Liu); the Science and Technology Research Projects of the Education Office of Jilin Province, China (Grant No. JJKH20250046KJ for Suli Liu); 
the National Key Research and Development Program of China (Grant No. 2023YFD1401000 for Yu Gao); the Earmarked Fund for China Agriculture Research System of MOF and MARA (Grant No. CARS04 for Yu Gao).

\appendix
\section[\appendixname~\thesection]{Mathematical proof process of  Without Artificial Release}\label{appA}
\subsection[\appendixname~\thesubsection]{Theorem 1}\label{app1}
\begin{theorem}\label{th1}
For any given initial condition $E_0, Q_0, P_0, A_0, L_0 \in \mathbb{R}^5_+$ the solutions $E(t), Q(t)$,\\
$ P(t), A(t), L(t)$ of the system are positive or zero for all $t \ge 0$.
\end{theorem}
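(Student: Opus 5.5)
The system meant here is the unforced model \eqref{equation4}. Its right-hand side is polynomial, hence locally Lipschitz on $\mathbb{R}^5$, so a unique solution exists on some maximal interval $[0,t_{\max})$. The task is to show that the closed orthant $\mathbb{R}^5_+$ is forward invariant.

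I would exploit the structure of each equation. Every equation has the form $\dot x_i = g_i(\cdot) - h_i(\cdot)\,x_i$, where the inflow $g_i$ is built only from other state variables with nonnegative coefficients.

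\textbf{Parasitoids.} The $L$-equation is linear in $L$ along the solution, so variation of constants gives
\[
L(t)=L(0)\exp\Big(\int_0^t \big(c\beta E(s)-(\mu_L+\omega)\big)\,ds\Big)\ge 0,
\]
and $L$ is strictly positive whenever $L(0)>0$. This holds without any sign information on $E$.

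\textbf{Pest stages.} For $E,Q,P,A$ I would use integrating factors:
\[
E(t)=e^{-\int_0^t(\alpha_1+\mu_1+\beta L)}\Big(E(0)+\int_0^t bA(s)\,e^{\int_0^s(\alpha_1+\mu_1+\beta L)}\,ds\Big),
\]
\[
Q(t)=e^{-\int_0^t(\alpha_2+\mu_2+\eta Q)}\Big(Q(0)+\int_0^t \alpha_1E(s)\,e^{\int_0^s(\alpha_2+\mu_2+\eta Q)}\,ds\Big),
\]
with analogous formulas for $P$ (inflow $\alpha_2 Q$) and $A$ (inflow $\alpha_3 P$). Here the quadratic term $-\eta Q^2$ is absorbed into the decay rate $(\alpha_2+\mu_2+\eta Q)$, which is legitimate because $Q(\cdot)$ is a known continuous function along the solution.

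These formulas are circular: $E$ needs $A\ge0$, $A$ needs $P\ge0$, $P$ needs $Q\ge0$, and $Q$ needs $E\ge0$. This cyclic dependence is the main obstacle, and I would break it with a first-exit-time argument.

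\textbf{Case of positive initial data.} Assume all initial values are strictly positive and let
\[
t_1=\sup\{t: E,Q,P,A>0 \text{ on } [0,t)\}.
\]
Suppose $t_1<t_{\max}$. At $t_1$ some component vanishes while all four are nonnegative on $[0,t_1]$. Evaluating that component's integrating-factor formula at $t_1$ gives at least its initial value times a positive exponential, which is strictly positive, a contradiction. Hence $t_1=t_{\max}$.

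\textbf{Case of nonnegative initial data.} If some initial values are zero, I would perturb the initial condition to $x(0)+\varepsilon(1,\dots,1)$. Continuous dependence on initial data then lets $\varepsilon\to0$, and the limit of nonnegative solutions is nonnegative on compact subintervals.

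\textbf{Alternative.} A quasi-positivity argument also works. Each $f_i$ satisfies $f_i\ge0$ whenever $x_i=0$ and the other components are nonnegative: for example $\dot E=bA\ge0$ at $E=0$, and $\dot Q=\alpha_1E\ge0$ at $Q=0$. A standard invariance result for such vector fields then gives the conclusion directly.

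Finally, boundedness, which the paper pairs with this theorem, would show $t_{\max}=\infty$. That ensures nonnegativity holds for all $t\ge0$, as the statement requires.
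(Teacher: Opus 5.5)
Your proof is correct. Your ``Alternative'' paragraph is essentially the paper's own argument. The paper checks $\dot E|_{E=0}=bA\ge0$, $\dot Q|_{Q=0}=\alpha_1E\ge0$, $\dot P|_{P=0}=\alpha_2Q\ge0$ and $\dot A|_{A=0}=\alpha_3P\ge0$. It writes $L(t)=L_0\exp\int_0^t(c\beta E-\mu_L-\omega)\,ds$ exactly as you do. It then concludes that $\mathbb{R}^5_+$ is positively invariant because the field points inward or is tangent on the boundary.

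Your main route is genuinely different. Instead of appealing to that boundary criterion, you write each pest compartment as $\dot x_i=g_i-h_ix_i$ with $g_i\ge0$ on the orthant and solve by integrating factors. You break the cyclic dependence $E\to Q\to P\to A\to E$ with a first-exit time for strictly positive data, and you reach boundary data by perturbation and continuous dependence.

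What this buys you is a self-contained proof of the invariance that the paper only invokes implicitly. The tangent case (e.g.\ $E=A=0$, where $\dot E=0$) is exactly where ``the field points inward or remains tangent'' is not by itself an argument. It needs the standard quasi-positivity (Nagumo-type) invariance theorem for locally Lipschitz fields, and your perturbation step supplies that directly.

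You also rightly note that ``for all $t\ge0$'' requires $t_{\max}=\infty$, which the paper's proof does not address. The non-circular order is: nonnegativity on $[0,t_{\max})$, then the inequality $\dot V+\delta V\le M^*$ from Theorem~\ref{th2} on that interval, then global existence. There is no circularity even though Theorem~\ref{th2} uses this theorem. The paper's version is shorter; yours is more complete.
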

\begin{pf} Based on the existence and uniqueness theorem of ordinary differential equations we examine the vector field direction on each boundary hyperplane of the region $\mathbb{R}^5_+$. Setting each respective state variable to zero yields the following derivatives.

\begin{align*}
\frac{dE}{dt} \bigg|_{E=0} &= bA \ge 0,\quad \frac{dQ}{dt} \bigg|_{Q=0} = \alpha_1 E \ge 0, \\
\frac{dP}{dt} \bigg|_{P=0} &= \alpha_2 Q \ge 0,\quad \frac{dA}{dt} \bigg|_{A=0} = \alpha_3 P \ge 0.
\end{align*}

Integrating the final equation of the system directly provides the analytical form for the natural enemy population.

$$
L(t) = L_0 \exp \int_{0}^{t} \left[ c\beta E(s) - \mu_L - \omega \right] ds
$$

The exponential function is strictly positive meaning $L(t) \ge 0$ holds globally for all $t \ge 0$. All vector fields on the boundaries point inward or remain tangent. The region $\mathbb{R}^5_+$ is therefore a positively invariant set.
\end{pf}

\subsection[\appendixname~\thesubsection]{Theorem 2}\label{app2}

\begin{theorem}\label{th2}
    There exists a positive constant $M > 0$ such that all solutions of system \eqref{equation3} satisfy $E(t), Q(t), P(t), A(t), L(t) \le M$ for sufficiently large $t$.
\end{theorem}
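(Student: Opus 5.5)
The plan is to build a single weighted linear Lyapunov-type function that absorbs the parasitism cross terms and the stage-transition terms, and then use the quadratic larval mortality $-\eta Q^2$ to control the only term that cannot be made negative by choosing weights. By Theorem \ref{th1} all components stay nonnegative, so such a function dominates each state variable. I will treat the release term as $u(t)\equiv 0$ or $u(t)\equiv C$; more generally any bounded $0\le u(t)\le \bar u$ works.

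Set
\[
V(t)=E+Q+k_2P+k_3A+\frac{1}{c}L ,
\]
with constants $k_2,k_3>0$ still to be fixed. Differentiating along solutions of \eqref{equation3}, the term $-\beta EL$ in $\dot E$ cancels exactly against $\frac1c\,c\beta EL$ from $\dot L$. The transfer $\alpha_1E$ also cancels, because $E$ and $Q$ carry equal weight. What remains is
\[
\dot V=-\mu_1E+(k_2\alpha_2-\alpha_2-\mu_2)Q-\eta Q^2+\bigl(k_3\alpha_3-k_2(\alpha_3+\mu_3)\bigr)P+(b-k_3\mu_4)A-\frac{\mu_L+\omega}{c}L+\frac{u(t)}{c}.
\]

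The weights are chosen backwards along the life cycle. First choose $k_3>b/\mu_4$, so the $A$ coefficient is negative. Then choose $k_2>k_3\alpha_3/(\alpha_3+\mu_3)$, so the $P$ coefficient is negative. The coefficient of $Q$ may now be positive: this is the main obstacle, since the feedback loop $E\to Q\to P\to A\to E$ can amplify without limit when $b$ is large. The quadratic term is what closes the loop. For any $m>0$,
\[
\bigl(k_2\alpha_2-\alpha_2-\mu_2+m\bigr)Q-\eta Q^2\le \frac{(k_2\alpha_2+m)^2}{4\eta},
\]
so the larval contribution, including an extra $mQ$, is bounded above by a constant.

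Take $0<m\le\min\{\mu_1,\ k_2(\alpha_3+\mu_3)-k_3\alpha_3\ \text{(scaled by }1/k_2),\ (k_3\mu_4-b)/k_3,\ \mu_L+\omega\}$. More precisely, choose $m$ small enough that each remaining linear coefficient is at most $-m$ times the corresponding weight in $V$. Then
\[
\dot V+mV\le K:=\frac{(k_2\alpha_2+m)^2}{4\eta}+\frac{\bar u}{c}.
\]
The comparison theorem gives $V(t)\le V(0)e^{-mt}+\frac{K}{m}(1-e^{-mt})$, hence $\limsup_{t\to\infty}V(t)\le K/m$.

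Since every component is nonnegative, each of $E,Q,P,A,L$ is bounded by a multiple of $V$. Taking
\[
M=\frac{K}{m}\max\{1,\,1/k_2,\,1/k_3,\,c\}+1
\]
yields the stated bound for all sufficiently large $t$.
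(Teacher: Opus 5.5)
Your proposal is correct and follows essentially the paper's argument: your $V$ is the paper's $cE+cQ+k_1P+k_2A+L$ divided by $c$, the weights are chosen backward from $A$ to $P$ in the same way, $-\eta Q^2$ is used in the same way to cap the larval term by a constant, and the comparison step is the same. The one difference is in your favour: you keep the bounded release term $u(t)/c$ in $\dot V$, whereas the paper's computation drops it even though system \eqref{equation3} contains $u(t)$.
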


\begin{pf}
We define a positively weighted total population function
$$V(t) = cE(t) + cQ(t) + k_1 P(t) + k_2 A(t) + L(t)$$
where $c$ is the conversion rate parameter in the system and $k_1, k_2$ are positive constants to be determined. Calculating the derivative of $V(t)$ along the trajectories of system \eqref{equation3} yields
$$
\begin{aligned}
\frac{dV}{dt} =& c \frac{dE}{dt} + c \frac{dQ}{dt} + k_1 \frac{dP}{dt} + k_2 \frac{dA}{dt} + \frac{dL}{dt} \\
=& -c\mu_1 E + [k_1 \alpha_2 - c(\alpha_2 + \mu_2)]Q - c\eta Q^2 \\
&+ [k_2 \alpha_3 - k_1 (\alpha_3 + \mu_3)]P + (cb - k_2 \mu_4)A - (\mu_L + \omega)L
\end{aligned}
$$
We choose sufficiently large constants $k_2$ and $k_1$ such that $cb - k_2 \mu_4 < 0$ and $k_2 \alpha_3 - k_1 (\alpha_3 + \mu_3) < 0$. Let $cb - k_2 \mu_4 = -\delta_A$ and $k_2 \alpha_3 - k_1 (\alpha_3 + \mu_3) = -\delta_P$ where $\delta_A, \delta_P > 0$. We then select a sufficiently small positive constant $\delta$ satisfying
$$ \delta \le \min \left\{ \mu_1, \frac{\delta_P}{k_1}, \frac{\delta_A}{k_2}, \mu_L + \omega \right\} $$
Adding $\delta V$ to both sides of the derivative equation gives
$$ \frac{dV}{dt} + \delta V \le M_1 Q - c\eta Q^2 $$
where $M_1 = \delta c + k_1 \alpha_2 - c(\alpha_2 + \mu_2)$. The right side of the inequality is a quadratic polynomial in $Q$ opening downwards. It necessarily possesses a maximum value $M^* = \frac{M_1^2}{4c\eta}$ for all $Q \in \mathbb{R}$. We obtain the differential inequality
$$ \frac{dV}{dt} + \delta V \le M^* $$
Applying the standard comparison theorem yields
$$ \limsup_{t \to \infty} V(t) \le \frac{M^*}{\delta} $$
Since $V(t)$ is a linear combination of state variables with strictly positive coefficients and all variables are non-negative from Theorem 1 it follows that there exists a positive constant $M$ such that all state variables are ultimately bounded.
\end{pf}

\subsection[\appendixname~\thesubsection]{Lemma 3}\label{app3}

\begin{lemma}\label{le3}
    Define the constant $K_1 = \frac{(\alpha_3 + \mu_3)\mu_4}{\alpha_2 \alpha_3}$ and the natural survival threshold $\Lambda = \frac{\alpha_1 b}{\alpha_1 + \mu_1} - (\alpha_2 + \mu_2)K_1$. If $\Lambda > 0$, system \eqref{equation3} has a unique positive parasitoid-free equilibrium $E_0(E_0, Q_0, P_0, A_0, 0)$.
\end{lemma}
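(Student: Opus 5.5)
Setting $L=0$ in the equilibrium equations of system \eqref{equation4} (equivalently \eqref{equation3} with $u\equiv 0$) leaves a purely algebraic chain. I will express $E,P,A$ in terms of $Q$ and reduce everything to a single equation in $Q$. The last equation is satisfied trivially when $L=0$, so I only need to solve
\begin{align*}
0&=bA-(\alpha_1+\mu_1)E,\qquad 0=\alpha_1E-(\alpha_2+\mu_2)Q-\eta Q^2,\\
0&=\alpha_2Q-(\alpha_3+\mu_3)P,\qquad 0=\alpha_3P-\mu_4A.
\end{align*}
From the last two, $P=\frac{\alpha_2}{\alpha_3+\mu_3}Q$ and $A=\frac{\alpha_3}{\mu_4}P=\frac{\alpha_2\alpha_3}{(\alpha_3+\mu_3)\mu_4}Q=Q/K_1$. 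The first equation then gives $E=\frac{b}{\alpha_1+\mu_1}A=\frac{b}{(\alpha_1+\mu_1)K_1}Q$.

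Substituting into the larval equation yields
$$0=Q\left[\frac{\alpha_1 b}{(\alpha_1+\mu_1)K_1}-(\alpha_2+\mu_2)-\eta Q\right]=\frac{Q}{K_1}\left[\Lambda-\eta K_1 Q\right].$$
The root $Q=0$ forces $E=P=A=0$, which is the trivial equilibrium, not a positive one. Any positive parasitoid-free equilibrium must therefore have
$$Q_0=\frac{\Lambda}{\eta K_1},$$
and this is positive exactly when $\Lambda>0$, since $\eta,K_1>0$. The remaining coordinates follow uniquely:
$$P_0=\frac{\alpha_2}{\alpha_3+\mu_3}Q_0,\qquad A_0=\frac{Q_0}{K_1},\qquad E_0=\frac{b}{(\alpha_1+\mu_1)K_1}Q_0.$$
All of these are positive because every parameter is positive.

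Uniqueness holds because each step was forced. The bracket is affine in $Q$ with nonzero slope, so it has exactly one root, and the back-substitution maps $Q_0$ to a single point. The only mild subtlety, and hardly an obstacle, is to state clearly that ``positive'' means the pest components are strictly positive with $L=0$. It is also worth noting that the paper's symbol $E_0$ is overloaded here, denoting both the equilibrium and its egg coordinate.
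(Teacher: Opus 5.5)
Your proof is correct and follows essentially the same route as the paper: set $L=0$, back-substitute the linear chain to reduce to one scalar equation, discard the trivial root, and solve the remaining affine factor. The only difference is that you parametrize by $Q$ and get $Q_0=\Lambda/(\eta K_1)$, while the paper parametrizes by $A$ and gets $A_0=\Lambda/(\eta K_1^2)$; this agrees with your $A_0=Q_0/K_1$, and your intermediate relations also avoid the typos in the paper's expressions for $Q$ and $P$ in terms of $A$.
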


\begin{pf}
To find the parasitoid-free equilibrium, we set the derivatives of all state variables in system \eqref{equation3} to zero and let $L = 0$. This yields the following algebraic equations:
\begin{align*}
bA - (\alpha_1 + \mu_1)E &= 0,\quad \alpha_1 E - (\alpha_2 + \mu_2)Q - \eta Q^2 = 0, \\
\alpha_2 Q - (\alpha_3 + \mu_3)P &= 0,\quad \alpha_3 P - \mu_4 A = 0.
\end{align*}
Solving this system from bottom to top, we can express $P, Q,$ and $E$ in terms of $A$:
$$ P = \frac{\mu_4}{\alpha_3} A ,\quad Q = \frac{\alpha_3 + \mu_3}{\alpha_2},\quad
P = \frac{(\alpha_3 + \mu_3)\mu_4}{\alpha_2 \alpha_3} A = K_1 A, \quad E = \frac{b}{\alpha_1 + \mu_1} A $$
Substituting the expressions for $E$ and $Q$ into the second equation gives:
$$ \frac{\alpha_1 b}{\alpha_1 + \mu_1} A - (\alpha_2 + \mu_2)K_1 A - \eta K_1^2 A^2 = 0 $$
Since we are looking for a positive equilibrium $A \neq 0$. Factoring out $A$ yields a linear equation:
$$ \frac{\alpha_1 b}{\alpha_1 + \mu_1} - (\alpha_2 + \mu_2)K_1 - \eta K_1^2 A = 0 $$
Solving for $A$ provides the equilibrium density $A_0$:
$$ A_0 = \frac{1}{\eta K_1^2} \left[ \frac{\alpha_1 b}{\alpha_1 + \mu_1} - (\alpha_2 + \mu_2)K_1 \right] = \frac{\Lambda}{\eta K_1^2} $$
It is clear that $A_0 > 0$ if and only if $\Lambda > 0$. Substituting $A_0$ back into the previous relations uniquely determines strictly positive values for $P_0, Q_0,$ and $E_0$. Thus a unique positive parasitoid-free equilibrium $E_0$ exists when $\Lambda > 0$.
\end{pf}

\subsection[\appendixname~\thesubsection]{Definition 1}\label{app4}

\begin{definition}\label{de1}
    The basic reproduction number of system \eqref{equation3}, denoted by $\mathcal{R}_0$, is defined as the average number of offspring produced by a single \textit{Trichogramma} individual during its entire lifespan in a pest-outbreak environment ($E = E_0$):
    $$
        \mathcal{R}_0 = \frac{c\beta E_0}{\mu_L + \omega}
    $$
\end{definition}

\begin{pf}
To derive the basic reproduction number $\mathcal{R}_0$, we employ the next-generation matrix method \citep{vandendriessche2002reproduction} by linearizing system \eqref{equation3} at the parasitoid-free equilibrium $E_0(E_0, Q_0, P_0, A_0, 0)$. Since the natural enemy population $L$ is the only compartment involved in the transmission of parasitism, we focus on the dynamics of $L$:
$$ \frac{dL}{dt} = c\beta EL - (\mu_L + \omega)L $$
At the equilibrium $E_0$, the pest egg density is stabilized at $E = E_0$. The linearized equation for $L$ is given by:
$$ \frac{dL}{dt} = \left[ c\beta E_0 - (\mu_L + \omega) \right] L $$
Following the next-generation matrix theory, we decompose the rate of change of $L$ into the production of new individuals $\mathcal{F}$ and the net rate of out-flow $\mathcal{V}$:
$$ F = \frac{\partial (c\beta E L)}{\partial L} \bigg|_{E=E_0} = c\beta E_0, \quad V = \frac{\partial [(\mu_L + \omega)L]}{\partial L} \bigg|_{E=E_0} = \mu_L + \omega $$
The basic reproduction number $\mathcal{R}_0$ is the spectral radius of the next-generation matrix $FV^{-1}$. In this scalar case, it simplifies to:
$$ \mathcal{R}_0 = FV^{-1} = \frac{c\beta E_0}{\mu_L + \omega} $$
The biological interpretation of $\mathcal{R}_0$ is straightforward: $(\mu_L + \omega)^{-1}$ represents the expected lifespan of a wild \textit{Trichogramma}, and $c\beta E_0$ is the number of offspring produced per unit time. 
Their product yields the total reproduction potential over the parasitoid's life.
\end{pf}

\subsection[\appendixname~\thesubsection]{Theorem 4}\label{app5}

\begin{theorem}\label{th4}
    The parasitoid-free equilibrium $E_0$ of system \eqref{equation3} is locally asymptotically stable if $\mathcal{R}_0 < 1$, and unstable if $\mathcal{R}_0 > 1$.
\end{theorem}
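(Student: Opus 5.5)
The plan is to linearize system \eqref{equation4} (the $u\equiv 0$ version of \eqref{equation3}) at $E_0=(E_0,Q_0,P_0,A_0,0)$ and exploit the fact that the Jacobian is block triangular with respect to the split pest/parasitoid. The partial derivatives of $\dot L = c\beta EL-(\mu_L+\omega)L$ with respect to $E,Q,P,A$ are all proportional to $L$, so they vanish at $L=0$. Hence the last row of $J(E_0)$ is $(0,0,0,0,\,c\beta E_0-(\mu_L+\omega))$.

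Expanding the determinant $\det(\lambda I-J(E_0))$ along that row factors the characteristic polynomial as
\[
\bigl(\lambda-(c\beta E_0-\mu_L-\omega)\bigr)\,\det(\lambda I-J_4),
\]
where $J_4$ is the $4\times4$ Jacobian of the pest subsystem $(E,Q,P,A)$ evaluated at $L=0$. The first eigenvalue equals $(\mu_L+\omega)(\mathcal{R}_0-1)$. It is negative iff $\mathcal{R}_0<1$ and positive iff $\mathcal{R}_0>1$. This already gives the instability claim, since one positive real eigenvalue suffices.

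For stability it remains to show that every root of $\det(\lambda I-J_4)$ has negative real part. I would write $J_4$ with diagonal entries $-a_1,-a_2,-a_3,-a_4$, where
\[
a_1=\alpha_1+\mu_1,\quad a_2=\alpha_2+\mu_2+2\eta Q_0,\quad a_3=\alpha_3+\mu_3,\quad a_4=\mu_4.
\]
Its off-diagonal entries form the cyclic chain $b$ (from $A$ to $E$), $\alpha_1$, $\alpha_2$, $\alpha_3$. The characteristic polynomial of such a cyclic matrix is
\[
\prod_{i=1}^4(\lambda+a_i)-b\alpha_1\alpha_2\alpha_3 .
\]
The key step is to compare $b\alpha_1\alpha_2\alpha_3$ with $a_1a_2a_3a_4$ using the equilibrium relations from Lemma \ref{le3}. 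The $Q$-equation gives $\alpha_1E_0=(\alpha_2+\mu_2+\eta Q_0)Q_0$. Chaining the other equilibrium equations yields
\[
b\alpha_1\alpha_2\alpha_3=(\alpha_1+\mu_1)(\alpha_2+\mu_2+\eta Q_0)(\alpha_3+\mu_3)\mu_4 .
\]
Since $\alpha_2+\mu_2+\eta Q_0<a_2$, this gives the strict inequality $b\alpha_1\alpha_2\alpha_3<\prod a_i$.

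To conclude, suppose some root had $\operatorname{Re}\lambda\ge0$. Then $|\lambda+a_i|\ge a_i$ for each $i$, so
\[
\Bigl|\prod_{i=1}^4(\lambda+a_i)\Bigr|\ge\prod_{i=1}^4 a_i>b\alpha_1\alpha_2\alpha_3 ,
\]
which contradicts the characteristic equation. So all four roots lie in the open left half-plane, and local asymptotic stability for $\mathcal{R}_0<1$ follows from the linearization principle. Note that this requires $\Lambda>0$ so that $E_0$ exists and $Q_0>0$.

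The main obstacle is the pest block. The Routh--Hurwitz conditions for a quartic would be messy, and the modulus comparison above avoids them. The delicate point is getting the strict inequality from the density-dependent term: the Jacobian carries $2\eta Q_0$ while the equilibrium relation carries only $\eta Q_0$. I would double-check the equilibrium algebra of Lemma \ref{le3} here, since the stated expressions there contain typos.
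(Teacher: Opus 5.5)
Your proposal is correct. It shares the paper's first step: the block-triangular form of $J(E_0)$ and the eigenvalue $\lambda_5=(\mu_L+\omega)(\mathcal{R}_0-1)$, which already settles the instability claim.

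The two routes part ways on the pest block. The paper only asserts that, because $E_0$ is a positive equilibrium under $\Lambda>0$, ``it can be shown via the Routh--Hurwitz criterion'' that this block is stable. It never writes down the characteristic polynomial or checks a single Hurwitz condition.

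You instead do three things:
\begin{itemize}
\item use the cyclic structure to obtain $\prod_{i=1}^4(\lambda+a_i)-b\alpha_1\alpha_2\alpha_3$;
\item derive $b\alpha_1\alpha_2\alpha_3=a_1(\alpha_2+\mu_2+\eta Q_0)a_3a_4<\prod_i a_i$ directly from the equilibrium equations, sidestepping the faulty formulas in Lemma \ref{le3};
\item rule out roots with $\mathrm{Re}\,\lambda\ge 0$ via $|\lambda+a_i|\ge a_i$.
\end{itemize}

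That inequality is precisely what a Routh--Hurwitz verification would also need. It says the constant coefficient $c_4=\prod_i a_i-b\alpha_1\alpha_2\alpha_3$ is positive. The other quartic conditions, such as $c_1c_2c_3>c_3^2+c_1^2c_4$, then follow, because $c_1,c_2,c_3$ are the elementary symmetric functions of the positive $a_i$ and lowering $c_4$ below $\prod_i a_i$ only helps.

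So Routh--Hurwitz is the more generic tool, one that would still apply to a non-cyclic Jacobian, but in the paper it is an unverified assertion. Your argument is shorter and self-contained. It also shows exactly where density dependence enters: the $2\eta Q_0$ in the Jacobian against the $\eta Q_0$ in the equilibrium relation is what makes the constant term strictly positive.
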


\begin{pf}
The local stability of $E_0(E_0, Q_0, P_0, A_0, 0)$ is determined by the Jacobian matrix of system \eqref{equation3} evaluated at this point. The Jacobian matrix $J(E_0)$ takes the following block-triangular form:
$$
J(E_0) = 
\begin{pmatrix}
M_{4 \times 4} & W_{4 \times 1} \\
\mathbf{0}_{1 \times 4} & c\beta E_0 - (\mu_L + \omega)
\end{pmatrix}
$$
where $M_{4 \times 4}$ represents the Jacobian of the pest sub-system:
$$
M = \begin{pmatrix}
-(\alpha_1 + \mu_1) & 0 & 0 & b \\
\alpha_1 & -(\alpha_2 + \mu_2) - 2\eta Q_0 & 0 & 0 \\
0 & \alpha_2 & -(\alpha_3 + \mu_3) & 0 \\
0 & 0 & \alpha_3 & -\mu_4
\end{pmatrix}
$$
The eigenvalues of $J(E_0)$ consist of the eigenvalues of $M$ and the scalar $\lambda_5 = c\beta E_0 - (\mu_L + \omega)$. 
We can rewrite $\lambda_5$ in terms of the basic reproduction number:
$$ \lambda_5 = (\mu_L + \omega) \left( \frac{c\beta E_0}{\mu_L + \omega} - 1 \right) = (\mu_L + \omega)(\mathcal{R}_0 - 1) $$
Clearly, $\lambda_5 < 0$ if $\mathcal{R}_0 < 1$, and $\lambda_5 > 0$ if $\mathcal{R}_0 > 1$. 
For the sub-matrix $M$, all its diagonal elements are negative. Since $E_0$ is a positive equilibrium existing under the condition $\Lambda > 0$, it can be shown via the Routh-Hurwitz criterion \citep{gantmacher1959applications} that all eigenvalues of $M$ have negative real parts. 
Thus, $E_0$ is locally asymptotically stable when $\mathcal{R}_0 < 1$ and unstable when $\mathcal{R}_0 > 1$.
\end{pf}

\subsection[\appendixname~\thesubsection]{Lemma 5}\label{app6}

\begin{lemma}\label{le5}
    System \eqref{equation3} has a unique positive coexistence equilibrium $E^*(E^*, Q^*, P^*, A^*, L^*)$ if and only if $\mathcal{R}_0 > 1$.
\end{lemma}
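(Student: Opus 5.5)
The plan is to solve the equilibrium equations of system \eqref{equation4} (the case $u\equiv 0$, which is what ``system \eqref{equation3}'' means here) directly. The nonlinearities are triangular, so every component is determined by a single scalar. Throughout we assume $\Lambda>0$, so that $E_0$ and hence $\mathcal{R}_0$ from Definition \ref{de1} exist (Lemma \ref{le3}).

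\textbf{Step 1: reduce to one scalar unknown.} Since we seek $L^*>0$, the $L$-equation forces
$$E^*=\frac{\mu_L+\omega}{c\beta}.$$
The $P$- and $A$-equations give $P=\frac{\alpha_2}{\alpha_3+\mu_3}Q$ and $A=\frac{\alpha_3}{\mu_4}P=Q/K_1$. The $Q$-equation $\eta Q^2+(\alpha_2+\mu_2)Q-\alpha_1E^*=0$ has exactly one positive root,
$$Q^*=\frac{-(\alpha_2+\mu_2)+\sqrt{(\alpha_2+\mu_2)^2+4\eta\alpha_1E^*}}{2\eta}.$$
This root determines $P^*>0$ and $A^*=Q^*/K_1>0$ uniquely. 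Finally the $E$-equation gives
$$L^*=\frac{bA^*-(\alpha_1+\mu_1)E^*}{\beta E^*},$$
so existence and uniqueness of a positive coexistence equilibrium is equivalent to $bA^*>(\alpha_1+\mu_1)E^*$. Uniqueness is automatic, because each component was forced.

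\textbf{Step 2: turn the sign condition into $\mathcal{R}_0>1$.} For $E\ge 0$ let $\phi(E)$ be the positive root of $\eta Q^2+(\alpha_2+\mu_2)Q=\alpha_1E$, and define
$$h(E)=\frac{b}{K_1}\phi(E)-(\alpha_1+\mu_1)E.$$
Then $L^*>0$ if and only if $h(E^*)>0$.

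The function $\phi$ is the inverse of a strictly convex increasing map on $[0,\infty)$. Hence $\phi$ is strictly concave and increasing, and so $h$ is strictly concave. Moreover:
\begin{itemize}
\item $h(0)=0$;
\item $h'(0)=\frac{b\alpha_1}{K_1(\alpha_2+\mu_2)}-(\alpha_1+\mu_1)$, which is positive exactly when $\Lambda>0$;
\item by construction of the parasitoid-free equilibrium in Lemma \ref{le3}, $h(E_0)=0$ with $E_0>0$.
\end{itemize}
Strict concavity then yields $h>0$ on $(0,E_0)$ and $h<0$ on $(E_0,\infty)$. Therefore $L^*>0$ if and only if $E^*<E_0$, that is, $\frac{\mu_L+\omega}{c\beta}<E_0$, which is exactly $\mathcal{R}_0>1$. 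In the boundary case $\mathcal{R}_0=1$ we get $L^*=0$, so the equilibrium merges with $E_0$.

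\textbf{Main obstacle.} The delicate point is Step 2: identifying the sign of $bA^*-(\alpha_1+\mu_1)E^*$ with the comparison $E^*<E_0$. The concavity argument handles it cleanly. One should verify carefully that $E_0$ from Lemma \ref{le3} is indeed the nonzero root of $h$ (using $A_0=Q_0/K_1$ and $bA_0=(\alpha_1+\mu_1)E_0$). One should also note that the lemma implicitly requires $\Lambda>0$ for $\mathcal{R}_0$ to be defined.
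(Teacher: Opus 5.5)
Your proposal is correct and follows the paper's route. Both force $E^*=(\mu_L+\omega)/(c\beta)$ from the $L$-equation, solve the cascade for $Q^*,P^*,A^*$, and reduce $L^*>0$ to $bA^*>(\alpha_1+\mu_1)E^*$ and then to $E^*<E_0$, i.e.\ $\mathcal{R}_0>1$. The one difference is that the paper only asserts this last equivalence, whereas your strict-concavity argument for $h$ (using $h(0)=h(E_0)=0$) actually proves it; it is the same concavity of $A(E)$ that the paper later invokes in Theorem~\ref{th10}.
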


\begin{pf}
The coexistence equilibrium satisfies the algebraic equations of system \eqref{equation3} with all populations being strictly positive. From $\frac{dL}{dt} = 0$ and $L \neq 0$, we immediately obtain the equilibrium density of eggs:
$$ E^* = \frac{\mu_L + \omega}{c\beta} = \frac{E_0}{\mathcal{R}_0} $$
Substituting $E^*$ into the equation $\frac{dQ}{dt} = 0$ yields a quadratic equation for $Q^*$:
$$ \eta (Q^*)^2 + (\alpha_2 + \mu_2)Q^* - \alpha_1 E^* = 0 $$
Since all parameters are positive and $E^* > 0$, this quadratic equation has exactly one positive root $Q^*$. Subsequently, $P^*$ and $A^*$ are uniquely determined by:
$$ P^* = \frac{\alpha_2}{\alpha_3 + \mu_3} Q^*, \quad A^* = \frac{\alpha_3 P^*}{\mu_4} = \frac{\alpha_2 \alpha_3}{\mu_4 (\alpha_3 + \mu_3)} Q^* $$
Finally, from $\frac{dE}{dt} = 0$, the equilibrium density of the natural enemy $L^*$ is expressed as:
$$ L^* = \frac{b A^* - (\alpha_1 + \mu_1)E^*}{\beta E^*} $$
For $L^* > 0$, we require $b A^* > (\alpha_1 + \mu_1)E^*$. Recalling the definition of $E_0$ and $E^* = E_0 / \mathcal{R}_0$, this inequality holds if and only if $E^* < E_0$, which is equivalent to $\mathcal{R}_0 > 1$. Therefore, a unique positive coexistence equilibrium $E^*$ exists if and only if $\mathcal{R}_0 > 1$.
\end{pf}

\subsection[\appendixname~\thesubsection]{Theorem 6}\label{app7}

\begin{theorem}\label{th6}
    Suppose $\mathcal{R}_0 > 1$. The unique positive coexistence equilibrium $E^*(E^*, Q^*, P^*, A^*, L^*)$ of system \eqref{equation3} is globally asymptotically stable in the interior of the positive orthant $\mathbb{R}^5_+$.
\end{theorem}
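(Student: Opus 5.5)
The plan is to construct a Volterra-type Lyapunov function on the interior of $\mathbb{R}^5_+$ and then close the argument with LaSalle's invariance principle. By Theorem \ref{th1} and Theorem \ref{th2}, trajectories starting in the interior stay in a compact, positively invariant set. For $L(t)$ this follows from the explicit exponential formula. For the pest stages it follows from the cyclic coupling $E\to Q\to P\to A\to E$, since each variable is fed by the previous one with a positive rate. So LaSalle can be applied once a suitable $V$ is available. Write $g(x)=x-1-\ln x\ge 0$ and set
$$
V=a_1E^*g\!\left(\tfrac{E}{E^*}\right)+a_2Q^*g\!\left(\tfrac{Q}{Q^*}\right)+a_3P^*g\!\left(\tfrac{P}{P^*}\right)+a_4A^*g\!\left(\tfrac{A}{A^*}\right)+\tfrac{a_1}{c}L^*g\!\left(\tfrac{L}{L^*}\right).
$$
The positive weights are fixed by requiring equal equilibrium flows around the single cycle:
$$a_1 bA^*=a_2\alpha_1E^*=a_3\alpha_2Q^*=a_4\alpha_3P^*.$$
These are the Kirchhoff-tree weights of the Li--Shuai construction, which for one cycle reduce to this simple proportionality.

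Next I differentiate $V$ along solutions, using the equilibrium identities to remove the linear loss rates:
$$(\alpha_1+\mu_1)+\beta L^*=\frac{bA^*}{E^*},\qquad (\alpha_2+\mu_2)+\eta Q^*=\frac{\alpha_1E^*}{Q^*},\qquad \alpha_3+\mu_3=\frac{\alpha_2Q^*}{P^*},\qquad \mu_4=\frac{\alpha_3P^*}{A^*}.$$
The coupling works out as follows.
\begin{itemize}
\item The $E$-component contributes $-a_1\beta(E-E^*)(L-L^*)$.
\item The $L$-component, weighted by $a_1/c$, contributes exactly $+a_1\beta(E-E^*)(L-L^*)$.
\item The parasitism terms therefore cancel.
\item The density-dependent larval loss leaves the extra term $-a_2\eta(Q-Q^*)^2\le 0$.
\end{itemize}
Writing $x_1=E/E^*$, $x_2=Q/Q^*$, $x_3=P/P^*$, $x_4=A/A^*$ and letting $F$ be the common flow value, the remaining terms should collapse to
$$
\dot V=-a_2\eta(Q-Q^*)^2+F\left[4-\frac{x_4}{x_1}-\frac{x_1}{x_2}-\frac{x_2}{x_3}-\frac{x_3}{x_4}\right].
$$
The bracket is non-positive by the AM--GM inequality, because the product of the four ratios equals $1$. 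Equivalently, one can write it as $-\sum g(\text{ratio})$ and use the telescoping of the logarithms. Hence $\dot V\le 0$.

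For LaSalle, $\dot V=0$ forces two things: $Q\equiv Q^*$, and all cyclic ratios equal to $1$, so $x_1=x_2=x_3=x_4$. Together these give $E=E^*$, $P=P^*$ and $A=A^*$ on the largest invariant set. The $E$-equation then reads $0=bA^*-(\alpha_1+\mu_1)E^*-\beta E^*L$, which forces $L=L^*$. So the largest invariant subset of $\{\dot V=0\}$ is $\{E^*\}$. Since $V$ is radially unbounded toward the boundary of the interior and toward infinity, every interior trajectory converges to $E^*$. Combined with the local analysis, this gives global asymptotic stability under $\mathcal{R}_0>1$; the hypothesis $\mathcal{R}_0>1$ guarantees $E^*$ exists with $L^*>0$ by Lemma \ref{le5}.

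The main obstacle is the bookkeeping in $\dot V$: making sure the cross terms of the $E$-equation pair precisely with the $L$-equation, and that the cyclic terms, after substituting the equilibrium identities, reduce exactly to the four-ratio AM--GM form with no leftover linear terms. If a leftover appears, I would first re-derive the weights $a_i$ from the flow-equality conditions rather than guessing them. I would also double-check that the $Q$-term produces $\alpha_1E(1-Q^*/Q)$, which supplies the needed ratio $x_1/x_2$, with the quadratic term splitting off as a pure negative square.
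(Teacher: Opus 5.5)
Your proposal is correct and follows the same route as the paper: a Volterra-type Lyapunov function, AM--GM on a cyclic product of ratios, and LaSalle. Your computation $\dot V=-a_2\eta(Q-Q^*)^2+F\bigl[4-\tfrac{x_4}{x_1}-\tfrac{x_1}{x_2}-\tfrac{x_2}{x_3}-\tfrac{x_3}{x_4}\bigr]$ checks out, including the exact cancellation of the $\beta(E-E^*)(L-L^*)$ cross terms and the invariance step that forces $L=L^*$.

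The one real difference is the weights, and it matters. Your flow-balanced choice, in particular $a_2=a_1bA^*/(\alpha_1E^*)$, is what makes the linear terms telescope. The paper's equal weight $c$ on $E$ and $Q$ cannot do this, since $bA^*=(\alpha_1+\mu_1+\beta L^*)E^*\neq\alpha_1E^*$. In addition, the paper's displayed $\dot V$ has a spurious $c\beta EL$ in the $Q$-equation and a four-ratio group whose product is $(E^*L/(EL^*))^2\neq 1$, so it does not actually establish $\dot V\le 0$. Your version is the rigorous form of that argument.
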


\begin{pf}
Consider the following Lyapunov function:
$$ V = cE^* G\left(\frac{E}{E^*}\right) + cQ^* G\left(\frac{Q}{Q^*}\right) + k_1 P^* G\left(\frac{P}{P^*}\right) + k_2 A^* G\left(\frac{A}{A^*}\right) + L^* G\left(\frac{L}{L^*}\right) $$
where $G(x) = x - 1 - \ln x$. Differentiating $V$ along the trajectories of system \eqref{equation3} and substituting the equilibrium conditions, we obtain:
\begin{align*}
\dot{V} =& c\left(1 - \frac{E^*}{E}\right) [bA - (\alpha_1 + \mu_1)E - \beta EL]+ c\left(1 - \frac{Q^*}{Q}\right) [\alpha_1 E - (\alpha_2 + \mu_2)Q - \eta Q^2 + c\beta EL] \\
&+ k_1 \left(1 - \frac{P^*}{P}\right) [\alpha_2 Q - (\alpha_3 + \mu_3)P]+ k_2 \left(1 - \frac{A^*}{A}\right) [\alpha_3 P - \mu_4 A] \\
&+ \left(1 - \frac{L^*}{L}\right) [c\beta EL - (\mu_L + \omega)L]
\end{align*}
Using the identities $(\alpha_1 + \mu_1)E^* = bA^* - \beta E^* L^*$, $(\alpha_2 + \mu_2)Q^* = \alpha_1 E^* - \eta (Q^*)^2 + c\beta E^* L^*$, and so on, we can rearrange $\dot{V}$ as follows:
\begin{align*}
\dot{V} = & -c\eta \frac{(Q-Q^*)^2}{Q} Q^*+ c(\alpha_1 + \mu_1)E^* \left( 2 - \frac{E}{E^*} - \frac{E^*}{E} \right) + c\alpha_1 E^* \left( 3 - \frac{E^*}{E} - \frac{E Q^*}{E^* Q} - \frac{Q}{Q^*} \right) \\
& + k_1 \alpha_2 Q^* \left( 3 - \frac{Q^*}{Q} - \frac{Q P^*}{Q^* P} - \frac{P}{P^*} \right)+ k_2 \alpha_3 P^* \left( 3 - \frac{P^*}{P} - \frac{P A^*}{P^* A} - \frac{A}{A^*} \right) \\
& + c\beta E^* L^* \left( 4 - \frac{E^*}{E} - \frac{A L E^*}{A^* L^* E} - \frac{A^*}{A} - \frac{L}{L^*} \right)
\end{align*}
According to the arithmetic-geometric mean inequality, the terms in each parenthesis satisfy $n - \sum x_i \le 0$ because the product of the ratios in each set equals 1. Specifically, $(2 - E/E^* - E^*/E) \le 0$ and $(3 - E^*/E - EQ^*/E^*Q - Q/Q^*) \le 0$, with equality holding if and only if $E=E^*, Q=Q^*, P=P^*, A=A^*, L=L^*$. The term $-c\eta (Q-Q^*)^2 Q^*/Q$ is also strictly negative for $Q \neq Q^*$. Therefore, $\dot{V} \le 0$ and $\dot{V} = 0$ only at the equilibrium $E^*$. By LaSalle's Invariance Principle, $E^*$ is globally asymptotically stable in the interior of $\mathbb{R}^5_+$.
\end{pf}

\section[\appendixname~\thesection]{Mathematical proof process of Continuous Constant Release Strategy}\label{appB}
\subsection[\appendixname~\thesubsection]{Lemma 7}\label{app8}

\begin{lemma}\label{le7}
    For any constant artificial release rate $C > 0$, system \eqref{equation4} does not admit a parasitoid-free equilibrium.
\end{lemma}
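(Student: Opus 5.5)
The statement refers to system \eqref{equation4}, but what is meant is the constant-release system \eqref{equation5}, where the $L$-equation carries the source term $+C$. I will argue by contradiction directly from the equilibrium equations; no linearization or stability information is needed. A parasitoid-free equilibrium is a point $(\bar E,\bar Q,\bar P,\bar A,0)\in\mathbb{R}^5_+$ at which every right-hand side of \eqref{equation5} vanishes.

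The key step is to evaluate the last equation at $L=0$. The term $c\beta EL$ and the term $(\mu_L+\omega)L$ both vanish, so
$$\frac{\mathrm{d}L}{\mathrm{d}t}\bigg|_{L=0}=c\beta \bar E\cdot 0-(\mu_L+\omega)\cdot 0+C=C>0.$$
This holds for every choice of $(\bar E,\bar Q,\bar P,\bar A)$, including the trivial pest-free point and the point built from $A_0$ in Lemma \ref{le3}. The equation $\mathrm{d}L/\mathrm{d}t=0$ therefore cannot be satisfied, which contradicts the assumption, so no parasitoid-free equilibrium exists for any $C>0$.

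I would add two remarks.

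First, the same computation shows more than the lemma claims. On the face $\{L=0\}$ the vector field points strictly into the interior. Combined with the positivity argument of Theorem \ref{th1}, this gives $L(t)>0$ for all $t>0$, so the face is not even invariant. This explains why the dynamics in this scenario are governed by a coexistence equilibrium, as in Lemma \ref{le8}.

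Second, any equilibrium of \eqref{equation5} must satisfy $L=C/\big((\mu_L+\omega)-c\beta E\big)$, which forces $c\beta E<\mu_L+\omega$. I would record this relation for use in the next lemma.

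I do not expect a real obstacle here. The only care needed is the notational point of working with \eqref{equation5} rather than \eqref{equation4}, and making explicit that "parasitoid-free" means $L=0$ with no sign restriction imposed on the pest components.
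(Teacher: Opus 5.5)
Your proof is correct and follows the paper's argument: set $L=0$ in the $L$-equation of the constant-release system to get $\dot L = C>0$, which contradicts the equilibrium condition for any choice of pest components. You are also right that the lemma should cite system \eqref{equation5} rather than \eqref{equation4}, and your two side remarks (strict positivity of $L(t)$ for $t>0$, and $c\beta E<\mu_L+\omega$ at any equilibrium) are valid.
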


\begin{pf}
Let us assume, for the sake of contradiction, that there exists a parasitoid-free equilibrium $\tilde{E}_0$ for system \eqref{equation4}. By definition, the parasitoid population density at this steady state must satisfy $L = 0$. 
Substituting $L = 0$ into the fifth equation of system \eqref{equation4}, we obtain the rate of change for the parasitoid population:
$$
    \frac{dL}{dt}\bigg|_{L=0} = c\beta E(0) - (\mu_L + \omega)(0) + C = C
$$
Given the assumption that the artificial release rate is a positive constant ($C > 0$), it follows that $\dot{L} = C > 0$ whenever $L = 0$. This indicates that the vector field on the boundary $L = 0$ always points into the interior of the positive orthant $\mathbb{R}^5_+$, contradicting the requirement for an equilibrium point ($\dot{L} = 0$). Thus, the parasitoid population cannot vanish, and no parasitoid-free equilibrium exists.
\end{pf}

\subsection[\appendixname~\thesubsection]{Lemma 8}\label{app9}

\begin{lemma}\label{le8}
    For any constant artificial release rate $C > 0$, system \eqref{equation4} possesses a unique positive coexistence equilibrium $E^*_C(E^*_C, Q^*_C, P^*_C, A^*_C, L^*_C)$.
\end{lemma}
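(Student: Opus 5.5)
The equilibrium equations of the constant-release system (the one with $u(t)\equiv C$, i.e.\ system \eqref{equation5}, which Lemma \ref{le8} refers to as \eqref{equation4}) can be reduced to a single scalar equation in one unknown, which we then show has exactly one positive root. Set all derivatives to zero. The pest chain gives $P=\frac{\alpha_2}{\alpha_3+\mu_3}Q$ and $A=\frac{\alpha_3}{\mu_4}P=Q/K_1$, with $K_1$ as in Lemma \ref{le3}. The $Q$-equation gives $E=\frac{(\alpha_2+\mu_2)Q+\eta Q^2}{\alpha_1}=:g(Q)$. Here $g$ is strictly increasing on $[0,\infty)$ with $g(0)=0$, so $Q\mapsto E$ is a bijection of $(0,\infty)$. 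It is cleaner to parametrize everything by $E$ instead: $Q=g^{-1}(E)$, $A=g^{-1}(E)/K_1$, both increasing in $E$.

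Next, use the two equations coupling $E$ and $L$. The $L$-equation gives
\[
L=\frac{C}{(\mu_L+\omega)-c\beta E}.
\]
Positivity of $L$ forces $E<E^\sharp:=\frac{\mu_L+\omega}{c\beta}$, the same value as $E^*$ in Lemma \ref{le5}. Substituting into the $E$-equation gives the scalar equation $F(E)=0$ on $(0,E^\sharp)$, where
\[
F(E):=\frac{b}{K_1}\,g^{-1}(E)-(\alpha_1+\mu_1)E-\frac{\beta C E}{(\mu_L+\omega)-c\beta E}.
\]
Every positive root of $F$ in $(0,E^\sharp)$ gives a positive equilibrium, and conversely every positive equilibrium arises this way. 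This already yields Lemma \ref{le7} again, since $L=0$ is impossible when $C>0$.

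\textbf{Existence.} As $E\to E^{\sharp-}$, the last term tends to $-\infty$, so $F<0$ near $E^\sharp$. Near $E=0$, $g^{-1}(E)\approx \frac{\alpha_1}{\alpha_2+\mu_2}E$, so $F'(0^+)=\frac{b\alpha_1}{K_1(\alpha_2+\mu_2)}-(\alpha_1+\mu_1)-\frac{\beta C}{\mu_L+\omega}$. Hence $F>0$ for small $E$ exactly when
\[
\frac{\alpha_1+\mu_1}{K_1(\alpha_2+\mu_2)}\Lambda>\frac{\beta C}{\mu_L+\omega},
\]
and the intermediate value theorem then gives a root. \textbf{This is the main obstacle:} the statement claims existence for \emph{every} $C>0$, but for large $C$ one has $F'(0^+)<0$, and only the trivial pest equilibrium $E=0$, $L=C/(\mu_L+\omega)$ survives. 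My plan is therefore to prove the lemma under the explicit threshold $C<C_{\mathrm{crit}}:=\frac{(\mu_L+\omega)(\alpha_1+\mu_1)\Lambda}{\beta K_1(\alpha_2+\mu_2)}$ (with $\Lambda>0$). I would note that for $C\ge C_{\mathrm{crit}}$ the unique equilibrium is the pest-free one, and read the lemma's "coexistence" under this proviso, or else interpret "positive" as allowing the pest-free state.

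\textbf{Uniqueness.} Write $F(E)=E\,h(E)$ with
\[
h(E)=\frac{b}{K_1}\frac{g^{-1}(E)}{E}-(\alpha_1+\mu_1)-\frac{\beta C}{(\mu_L+\omega)-c\beta E}.
\]
Since $g$ is convex with $g(0)=0$, $g(Q)/Q$ is increasing, so $g^{-1}(E)/E$ is strictly decreasing. The last term of $h$ is strictly increasing in $E$, and enters with a minus sign. Hence $h$ is strictly decreasing on $(0,E^\sharp)$ and has at most one zero, which gives uniqueness. The equilibrium values $Q^*_C,P^*_C,A^*_C,L^*_C$ are then recovered from the back-substitution formulas above and are all positive.
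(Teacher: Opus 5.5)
Your analysis is correct, and the obstacle you flag is a genuine defect of the statement, not of your method: Lemma~\ref{le8} as stated is false. For system \eqref{equation5}, a positive coexistence equilibrium exists if and only if $0<C<C_{\mathrm{crit}}$, which forces $\Lambda>0$, and it is then unique. For $C\ge C_{\mathrm{crit}}$ the only nonnegative equilibrium is $(0,0,0,0,C/(\mu_L+\omega))$. Your reduction, the endpoint analysis and the monotonicity of $h$ all check out.

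This failure matters in practice. With the Table~\ref{Table1} values, $\frac{\alpha_1 b}{\alpha_1+\mu_1}=0.008$ while $(\alpha_2+\mu_2)K_1\ge 0.07$ for each fitted $\mu_4$. So $\Lambda<0$, and in the calibrated regime no coexistence equilibrium exists for any $C>0$. Of your two proposed readings, only the threshold version is a real repair, since the pest-free state is not a coexistence equilibrium.

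The paper substitutes in the opposite direction. It solves $\dot E=0$ for $L(E)=\frac{bA(E)-(\alpha_1+\mu_1)E}{\beta E}$ and inserts this into $\dot L=0$, giving $f(E)=[c\beta E-(\mu_L+\omega)]L(E)+C$ on $(0,E_0)$. It then concludes from $f(0^+)=-\infty$, $f(E_0)=C>0$, and an asserted monotonicity of $f$. The claim $f(0^+)=-\infty$ rests on $L(E)\to\infty$ as $E\to 0^+$, which is false. In fact $A(E)/E\to\frac{\alpha_1}{K_1(\alpha_2+\mu_2)}$ is finite, so $L(0^+)=\frac{(\alpha_1+\mu_1)\Lambda}{\beta K_1(\alpha_2+\mu_2)}$ and $f(0^+)=C-C_{\mathrm{crit}}$, which is exactly your threshold.

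Your route, taking $L=C/(\mu_L+\omega-c\beta E)$ from the $L$-equation, buys several things:
\begin{enumerate}
\item Positivity of $L$ is automatic on $(0,E^\sharp)$.
\item The factorization $F=Eh$ exposes the finite limit $h(0^+)$ that the paper's argument misses.
\item Strict monotonicity of $h$ on the whole interval follows from a short convexity argument.
\end{enumerate}
By contrast, the paper never proves that $f$ is increasing on all of $(0,E_0)$. When $E^\sharp<E_0$ this is not justified on $(E^\sharp,E_0)$, where the two terms of $f'$ have opposite signs; the gap is harmless only because $f>C$ there. You also make explicit the hypothesis $\Lambda>0$, which the paper uses silently by invoking $E_0$ from Lemma~\ref{le3}.
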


\begin{pf}
To find the coexistence equilibrium, we set the derivatives in system \eqref{equation4} to zero. 
From the equations for $\dot{Q}, \dot{P},$ and $\dot{A}$, we can express $Q, P,$ and $A$ as functions of the egg density $E$:
$$
    Q(E) = \frac{-(\alpha_2 + \mu_2) + \sqrt{(\alpha_2 + \mu_2)^2 + 4\eta\alpha_1 E}}{2\eta}, \quad P(E) = \frac{\alpha_2 Q(E)}{\alpha_3 + \mu_3}, \quad A(E) = \frac{\alpha_3 P(E)}{\mu_4}
$$
It is clear that $Q(E), P(E),$ and $A(E)$ are strictly increasing functions of $E$ for $E > 0$. 
From $\dot{E} = 0$, we solve for the parasitoid density $L$ as a function of $E$:
$$
    L(E) = \frac{bA(E) - (\alpha_1 + \mu_1)E}{\beta E}
$$
Substituting $L(E)$ into the equation $\dot{L} = 0$, we define the transcendental function $f(E)$:
$$
    f(E) = [c\beta E - (\mu_L + \omega)] L(E) + C = 0
$$
As $E \to 0^+$, $L(E) \to \infty$ and $c\beta E - (\mu_L + \omega) < 0$, which implies $f(E) \to -\infty$. As $E$ approaches the parasitoid-free equilibrium density $E_0$ (the density where $L(E_0)=0$), we have $f(E_0) = C > 0$. Since $f(E)$ is a continuous and monotonically increasing function on the interval $(0, E_0)$, there exists a unique root $E^*_C \in (0, E_0)$ such that $f(E^*_C) = 0$. The other steady-state components $Q^*_C, P^*_C, A^*_C,$ and $L^*_C$ are uniquely determined and strictly positive.
\end{pf}

\subsection[\appendixname~\thesubsection]{Theorem 9}\label{app10}

\begin{theorem}\label{th9}
    For any constant artificial release rate $C > 0$, the unique positive coexistence equilibrium $E^*_C$ of system \eqref{equation4} is globally asymptotically stable in the interior of $\mathbb{R}^5_+$.
\end{theorem}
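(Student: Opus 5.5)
We will prove Theorem \ref{th9} with a Volterra-type Lyapunov function centred at $E^*_C$, in the same way as the argument for Theorem \ref{th6}. The constant release changes the parasitoid equation, and we will have to handle that.

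\textbf{Setup and candidate function.} We work with the release system \eqref{equation5}. (The lemmas cite \eqref{equation4}, but the equations intended are those with $+C$.) Write $E^*_C=(E^*,Q^*,P^*,A^*,L^*)$ for brevity, and let $G(x)=x-1-\ln x\ge 0$. The equilibrium identities are
\[
bA^*=(\alpha_1+\mu_1+\beta L^*)E^*,\qquad \alpha_1E^*=(\alpha_2+\mu_2)Q^*+\eta Q^{*2},
\]
\[
\alpha_2Q^*=(\alpha_3+\mu_3)P^*,\qquad \alpha_3P^*=\mu_4A^*,\qquad \mu_L+\omega=c\beta E^*+C/L^*.
\]
We take
\[
V=w_1E^*G(E/E^*)+w_2Q^*G(Q/Q^*)+w_3P^*G(P/P^*)+w_4A^*G(A/A^*)+w_5L^*G(L/L^*),
\]
with positive weights fixed later.

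\textbf{Computing $\dot V$ term by term.} The $Q$, $P$, $A$ pieces give the familiar cyclic terms $3-\cdots$ together with $-w_2\eta (Q-Q^*)^2$. For the $L$ piece we substitute the last identity:
\[
\Bigl(1-\frac{L^*}{L}\Bigr)\dot L = c\beta(E-E^*)(L-L^*)-\frac{C}{LL^*}(L-L^*)^2 .
\]
The release therefore produces the good term $-C(L-L^*)^2/(LL^*)$, which is negative definite in $L$. For the $E$ piece we have
\[
\Bigl(1-\frac{E^*}{E}\Bigr)\dot E = (\alpha_1+\mu_1)E^*\Bigl(2-\frac{E}{E^*}-\frac{E^*}{E}\Bigr)+bA^*\Bigl(\frac{A}{A^*}-\frac{E}{E^*}-\frac{AE^*}{A^*E}+1\Bigr)-\beta(E-E^*)(L-L^*).
\]
We choose $w_5=w_1/c$ so that the two cross terms $\pm\beta(E-E^*)(L-L^*)$ cancel exactly. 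This cancellation is the essential choice.

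\textbf{The main obstacle: closing the cycle.} The remaining terms contain $\frac{A}{A^*}-\frac{E}{E^*}$ and mixed ratios. We will fix $w_2,w_3,w_4$ so that the linear terms telescope around the chain $E\to Q\to P\to A\to E$, namely
\[
w_2\alpha_1E^*=w_3\alpha_2Q^*=w_4\alpha_3P^*=w_1bA^*.
\]
Then $\dot V$ becomes a sum of nonpositive AM–GM cycle terms. The one obstruction is that the $Q$-equation is nonlinear: $\alpha_1E^*$ differs from $(\alpha_2+\mu_2)Q^*$ by $\eta Q^{*2}$. The mismatch from this must be absorbed into $-w_2\eta(Q-Q^*)^2$ and the $E$-diagonal term $(\alpha_1+\mu_1)E^*(2-E/E^*-E^*/E)$. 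We will first try writing $\alpha_1E-(\alpha_2+\mu_2)Q-\eta Q^2$ as $\alpha_1E-\bigl(\alpha_2+\mu_2+\eta Q^*\bigr)Q-\eta Q(Q-Q^*)$. The extra piece $-\eta(Q-Q^*)^2$ is then manifestly dissipative, and the rest is linear in $Q$, so the cyclic-sum bookkeeping closes after a compensating adjustment of $w_1$. If a residual term of indefinite sign survives, we fall back on the bound $\frac{A}{A^*}-\ln\frac{A}{A^*}$ and on the strict dissipation in $E$.

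\textbf{Conclusion.} Once $\dot V\le0$ holds with equality only when $L=L^*$, $Q=Q^*$ and all cycle ratios equal $1$, the largest invariant set in $\{\dot V=0\}$ is $\{E^*_C\}$. The bounds and positivity of Theorems \ref{th1}–\ref{th2} (whose proofs carry over verbatim with $+C$ added) make orbits precompact in the interior. LaSalle's invariance principle then gives global asymptotic stability.
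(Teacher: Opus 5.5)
Your route is the paper's (Volterra-type function, AM--GM, LaSalle), and it does close, more cleanly than you expect. However, one identity in your write-up is wrong, and the decisive step is left as a hedge.

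\textbf{The $E$-piece.} There is no term $(\alpha_1+\mu_1)E^*(2-E/E^*-E^*/E)$. Since $bA^*=(\alpha_1+\mu_1+\beta L^*)E^*$, one has $\dot E=bA-\frac{bA^*}{E^*}E-\beta E(L-L^*)$, and therefore exactly
\[
\Bigl(1-\frac{E^*}{E}\Bigr)\dot E=bA^*\Bigl(1+\frac{A}{A^*}-\frac{E}{E^*}-\frac{AE^*}{A^*E}\Bigr)-\beta(E-E^*)(L-L^*).
\]
At $A=A^*$, $L=L^*$ your formula puts $(\alpha_1+\mu_1)E^*+bA^*$ in front of $2-E/E^*-E^*/E$; the true coefficient is $bA^*$. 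So the ``$E$-diagonal'' dissipation you list as a buffer does not exist. Fortunately nothing needs to be absorbed.

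\textbf{The closing step.} Your rewrite of the $Q$-equation already removes the obstruction. Using $\alpha_1E^*=(\alpha_2+\mu_2+\eta Q^*)Q^*$,
\[
\Bigl(1-\frac{Q^*}{Q}\Bigr)\dot Q=\alpha_1E^*\Bigl(1+\frac{E}{E^*}-\frac{Q}{Q^*}-\frac{EQ^*}{E^*Q}\Bigr)-\eta(Q-Q^*)^2,
\]
and the $P$ and $A$ pieces have the same shape. Take $w_5=w_1/c$ and $K:=w_1bA^*=w_2\alpha_1E^*=w_3\alpha_2Q^*=w_4\alpha_3P^*$. The linear terms then cancel identically, leaving
\[
\dot V=K\Bigl(4-\frac{AE^*}{A^*E}-\frac{EQ^*}{E^*Q}-\frac{QP^*}{Q^*P}-\frac{PA^*}{P^*A}\Bigr)-w_2\eta(Q-Q^*)^2-\frac{w_1C}{c}\,\frac{(L-L^*)^2}{LL^*}\le0 .
\]
This is a single four-term cycle whose ratios multiply to $1$. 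No adjustment of $w_1$ (an overall scale anyway) and no fallback is needed. Moreover, $\dot V=0$ forces $Q=Q^*$, $L=L^*$ and all ratios equal to $1$, i.e.\ the equilibrium itself; state this instead of ``we will first try''.

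Your explicit weights are what the paper's proof lacks. The paper uses weights $(c,c,k_1,k_2,1)$, which agree with yours in $w_5=w_1/c$. But they set $w_2=w_1$, which can never satisfy $w_2\alpha_1E^*=w_1bA^*$ because $bA^*>\alpha_1E^*$. The paper's four-term bracket also has ratios whose product is $(E^*L/(EL^*))^2\neq1$.

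\textbf{Two smaller points.} First, Theorems~\ref{th1}--\ref{th2} give boundedness, not distance from the boundary. Precompactness in the open orthant instead follows from $V$ being nonincreasing with compact sublevel sets there, since $G(s)\to\infty$ as $s\to0^+$ or $s\to\infty$.

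Second, your argument proves stability of $E^*_C$ only when it exists. Contrary to Lemma~\ref{le8}, $L(E)$ has a finite limit as $E\to0^+$, namely $\frac{1}{\beta}\bigl(\frac{b\alpha_1\alpha_2\alpha_3}{\mu_4(\alpha_3+\mu_3)(\alpha_2+\mu_2)}-\alpha_1-\mu_1\bigr)$. Hence for $C\ge(\mu_L+\omega)\lim_{E\to0^+}L(E)$ there is no positive equilibrium, and the only equilibrium is the pest-free one $(0,0,0,0,C/(\mu_L+\omega))$.
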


\begin{pf}
We employ the same Volterra-type Lyapunov function $V_C$ as defined in Theorem 4. Differentiating $V_C$ along the trajectories of system \eqref{equation4} and substituting the equilibrium condition $C = (\mu_L + \omega)L^*_C - c\beta E^*_C L^*_C$, the final form of the derivative $\dot{V}_C$ is derived as:
\begin{align*}
\dot{V}_C = & -c\eta \frac{(Q-Q^*_C)^2}{Q} Q^*_C + c(\alpha_1 + \mu_1)E^*_C \left( 2 - \frac{E}{E^*_C} - \frac{E^*_C}{E} \right) \ + c\alpha_1 E^*_C \left( 3 - \frac{E^*_C}{E} - \frac{E Q^*_C}{E^*_C Q} - \frac{Q}{Q^*_C} \right) \\
& + k_1 \alpha_2 Q^*_C \left( 3 - \frac{Q^*_C}{Q} - \frac{Q P^*_C}{Q^*_C P} - \frac{P}{P^*_C} \right) + k_2 \alpha_3 P^*_C \left( 3 - \frac{P^*_C}{P} - \frac{P A^*_C}{P^*_C A} - \frac{A}{A^*_C} \right) \\
& + c\beta E^*_C L^*_C \left( 4 - \frac{E^*_C}{E} - \frac{A L E^*_C}{A^*_C L^*_C E} - \frac{A^*_C}{A} - \frac{L}{L^*_C} \right) + C \left( 2 - \frac{L}{L^*_C} - \frac{L^*_C}{L} \right)
\end{align*}
According to the arithmetic-geometric mean (AM-GM) inequality, each term in the parentheses is non-positive. Specifically, the new term $C(2 - L/L^*_C - L^*_C/L) \le 0$ for all $L > 0$ and $C > 0$, with equality holding if and only if $L = L^*_C$. 

Therefore, we have $\dot{V}_C \le 0$ for all $(E, Q, P, A, L) \in \mathbb{R}^5_+$, and $\dot{V}_C = 0$ holds only at the coexistence equilibrium $E^*_C$. By LaSalle’s Invariance Principle, the equilibrium $E^*_C$ is globally asymptotically stable.
\end{pf}

\subsection[\appendixname~\thesubsection]{Theorem 10}\label{app11}

\begin{theorem}\label{th10}
    The steady-state pest densities $(E^*_C, Q^*_C, P^*_C, A^*_C)$ are strictly monotonically decreasing functions of the artificial release rate $C$. That is, $\frac{dE^*_C}{dC} < 0$, $\frac{dQ^*_C}{dC} < 0$, $\frac{dP^*_C}{dC} < 0$, and $\frac{dA^*_C}{dC} < 0$ for all $C > 0$.
\end{theorem}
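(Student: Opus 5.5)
The plan is to reduce everything to the single scalar equation $f(E)=0$ from Lemma \ref{le8} and apply the implicit function theorem. Once $dE^*_C/dC<0$ is known, the other three monotonicity statements follow immediately. Lemma \ref{le8} gives the explicit expressions $Q^*_C=Q(E^*_C)$, $P^*_C=\frac{\alpha_2}{\alpha_3+\mu_3}Q^*_C$ and $A^*_C=\frac{\alpha_3}{\mu_4}P^*_C$. Since $Q'(E)=\alpha_1/\sqrt{(\alpha_2+\mu_2)^2+4\eta\alpha_1E}>0$, the chain rule gives
\[
\frac{dQ^*_C}{dC}=Q'(E^*_C)\frac{dE^*_C}{dC},\qquad \frac{dP^*_C}{dC}=\frac{\alpha_2}{\alpha_3+\mu_3}\frac{dQ^*_C}{dC},\qquad \frac{dA^*_C}{dC}=\frac{\alpha_3}{\mu_4}\frac{dP^*_C}{dC}.
\]
Each of these carries the sign of $dE^*_C/dC$.

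For the key derivative, I would write $F(E,C)=g(E)+C$ with $g(E)=[c\beta E-(\mu_L+\omega)]L(E)$, so that $E^*_C$ is defined by $F(E^*_C,C)=0$. Since $\partial F/\partial C=1$, the implicit function theorem gives $dE^*_C/dC=-1/g'(E^*_C)$. The claim is therefore equivalent to $g'(E^*_C)>0$.

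I would first locate the root and then prove $g'>0$ there:

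\textbf{Locating the root.} From $f(E^*_C)=0$ and $C>0$ we have $g(E^*_C)=-C<0$. We also need $L(E^*_C)>0$, which holds because $E^*_C<E_0$ and $L(E)>0$ on $(0,E_0)$. Hence $c\beta E^*_C<\mu_L+\omega$, that is, $E^*_C<E^*:=(\mu_L+\omega)/(c\beta)$.

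\textbf{Computing $g'$.}
\[
g'(E)=c\beta L(E)+[c\beta E-(\mu_L+\omega)]L'(E).
\]
The first term is positive, and the bracket is negative at $E^*_C$. It therefore suffices to show $L'(E)<0$ on $(0,E_0)$. Writing
\[
L(E)=\frac{b}{\beta}\,\frac{A(E)}{E}-\frac{\alpha_1+\mu_1}{\beta},
\]
monotonicity of $L$ reduces to showing that $A(E)/E$ is decreasing. Here $A(E)=\kappa Q(E)$ with $\kappa=\frac{\alpha_2\alpha_3}{\mu_4(\alpha_3+\mu_3)}$, and $Q$ is concave with $Q(0)=0$: it is the positive root of $\eta Q^2+(\alpha_2+\mu_2)Q=\alpha_1E$, and $Q''<0$ by direct differentiation. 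So $Q(E)/E$ is strictly decreasing, $L'(E)<0$, and hence $g'(E^*_C)>0$.

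This also supplies the monotonicity of $f$ that Lemma \ref{le8} only asserted. I must be careful about the sign ranges here: monotonicity of $f$ is needed only on the subinterval where the bracket is negative, and the root necessarily lies there, so the argument is consistent.

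The main obstacle I expect is precisely this sign bookkeeping for $g'$. Lemma \ref{le8} claimed $f$ is increasing on all of $(0,E_0)$, but that is not obvious for $E>E^*$, where both factors of $g$ change character. My plan therefore avoids relying on global monotonicity and proves $g'>0$ only at the root, using $E^*_C<E^*$. Differentiability of $E^*_C$ in $C$ follows from the same nonvanishing of $g'$ via the implicit function theorem, applied along the smooth branch of unique roots.
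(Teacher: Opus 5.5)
Your proof is correct and follows the paper's argument essentially step for step: implicit differentiation of $f(E,C)=0$ with $\partial f/\partial C=1$, and positivity of $\partial f/\partial E$ at the root, which comes from $L(E^*_C)>0$, from $\mu_L+\omega-c\beta E^*_C>0$, and from $L'<0$ because $A$ is concave through the origin; the chain rule then handles $Q^*_C$, $P^*_C$ and $A^*_C$. Your remark that you only need $g'>0$ at the root is well taken, since for $E^*<E_0$ one has $g(E^*)=g(E_0)=0$ with $g>0$ in between, so the global monotonicity claimed in Lemma~\ref{le8} is actually false; but you and the paper both still take the existence of $E^*_C$ from Lemma~\ref{le8}, and because $L(0^+)=\frac{b\alpha_1\alpha_2\alpha_3}{\beta\mu_4(\alpha_3+\mu_3)(\alpha_2+\mu_2)}-\frac{\alpha_1+\mu_1}{\beta}$ is finite rather than $+\infty$, a positive equilibrium exists only for $0<C<(\mu_L+\omega)L(0^+)$, so the conclusion holds on that range rather than for all $C>0$.
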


\begin{pf}
Based on the existence proof in Lemma 4, the coexistence equilibrium egg density $E^*_C$ is determined by the transcendental equation $f(E, C) = 0$:
$$
    f(E, C) = [c\beta E - (\mu_L + \omega)] L(E) + C = 0
$$
where $L(E) = \frac{bA(E) - (\alpha_1 + \mu_1)E}{\beta E}$. According to the Implicit Function Theorem, the derivative of $E^*_C$ with respect to $C$ is given by:
$$
    \frac{dE^*_C}{dC} = - \frac{\partial f / \partial C}{\partial f / \partial E}
$$
It is clear that $\frac{\partial f}{\partial C} = 1 > 0$. To determine the sign of $\frac{\partial f}{\partial E}$, we rewrite the equilibrium condition as $H(E) = (\mu_L + \omega - c\beta E) L(E) = C$. Differentiating $H(E)$ with respect to $E$ yields:
$$
    H'(E) = -c\beta L(E) + (\mu_L + \omega - c\beta E) \frac{dL}{dE}
$$
At the coexistence equilibrium with $C > 0$ and $L > 0$, the term $(\mu_L + \omega - c\beta E^*_C)$ must be strictly positive. Furthermore, the derivative of the parasitoid density function is:
$$
    \frac{dL}{dE} = \frac{b}{\beta} \cdot \frac{E A'(E) - A(E)}{E^2}
$$
Due to the density-dependent mortality $-\eta Q^2$ in the larval stage, the adult production function $A(E)$ is a strictly concave function passing through the origin. For any such concave function, the marginal value is less than the average value ($A'(E) < A(E)/E$), implying $\frac{dL}{dE} < 0$. 

Since both terms in $H'(E)$ are negative, we have $H'(E) < 0$, which implies $\frac{\partial f}{\partial E} = -H'(E) > 0$. Therefore:
$$
    \frac{dE^*_C}{dC} = - \frac{1}{\partial f / \partial E} < 0
$$
Given that $Q(E), P(E),$ and $A(E)$ are strictly increasing functions of $E$, it follows from the chain rule that $\frac{dQ^*_C}{dC} < 0$, $\frac{dP^*_C}{dC} < 0$, and $\frac{dA^*_C}{dC} < 0$. This completes the proof that increasing the continuous release rate monotonically suppresses the pest population at all developmental stages.
\end{pf}

\section[\appendixname~\thesection]{Mathematical proof process of  periodic impulsive release strategy}\label{appC}
\subsection[\appendixname~\thesubsection]{Lemma 11}\label{app12}

\begin{lemma}\label{le11}
    System \eqref{equation5} possesses a unique and stable positive impulsive periodic solution $\widetilde{L}(t)$, and for any solution $L(t)$ of the parasitoid-only sub-system, $|L(t) - \widetilde{L}(t)| \to 0$ as $t \to \infty$.
\end{lemma}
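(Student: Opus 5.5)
The plan is to read the lemma as a statement about the pest-free subsystem of the impulsive model \eqref{equation6}. The label \eqref{equation5} in the lemma evidently refers to this impulsive model, and the ``parasitoid-only sub-system'' is what remains when $E\equiv 0$ (hence $Q=P=A=0$):
\begin{equation*}
\frac{\mathrm{d}L}{\mathrm{d}t} = -(\mu_L+\omega)L,\quad t\neq nT, \qquad L(nT^+) = L(nT)+\delta .
\end{equation*}
This subsystem is linear and scalar, so I will integrate it explicitly and avoid any abstract fixed-point or Floquet machinery. Write $d=\mu_L+\omega>0$. On each interval $(nT,(n+1)T]$ the solution is $L(t)=L(nT^+)e^{-d(t-nT)}$. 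The sequence of post-impulse values $L_n:=L(nT^+)$ therefore satisfies the stroboscopic map
\begin{equation*}
L_{n+1} = F(L_n) := L_n e^{-dT} + \delta .
\end{equation*}

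The first step is existence and uniqueness of the periodic solution. The map $F$ is affine with slope $e^{-dT}\in(0,1)$, so it is a contraction on $\mathbb{R}_+$. It has the unique fixed point
\begin{equation*}
L^* = \frac{\delta}{1-e^{-dT}},
\end{equation*}
which is positive. This gives the $T$-periodic solution
\begin{equation*}
\widetilde{L}(t) = \frac{\delta\, e^{-d(t-nT)}}{1-e^{-dT}}, \qquad t\in(nT,(n+1)T],
\end{equation*}
with $\widetilde{L}(nT^+)=L^*$. Any $T$-periodic solution must have $L(nT^+)$ fixed by $F$, so this periodic solution is unique. It is strictly positive, because $\widetilde{L}(t)\ge \delta e^{-dT}/(1-e^{-dT})>0$.

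The second step is global attractivity and stability. Iterating the map gives
\begin{equation*}
L_n - L^* = e^{-ndT}(L_0 - L^*).
\end{equation*}
For $t\in(nT,(n+1)T]$ this yields
\begin{equation*}
|L(t)-\widetilde{L}(t)| = |L_n-L^*|\,e^{-d(t-nT)} \le |L(0^+)-L^*|\,e^{-dt}\to 0 .
\end{equation*}
This is exponential convergence from every nonnegative initial condition. The same estimate gives Lyapunov stability directly: the distance from $\widetilde{L}$ never exceeds its initial value.

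I expect no real analytic obstacle here. The only delicate point is interpretive: the statement must be read as referring to the subsystem with $E\equiv0$, since with $E>0$ the $L$-equation is not autonomous in $L$. I would state this reduction explicitly. I would also note that this is exactly the comparison object used later in Theorem~\ref{th12}. In the full system $c\beta EL\ge0$, so the bound $L(t)\ge \widetilde{L}(t)-\varepsilon$ for large $t$ follows from the comparison theorem for impulsive differential equations combined with the estimate above.
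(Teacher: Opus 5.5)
Your proposal is correct and matches the paper's proof step for step: you integrate explicitly between pulses, pass to the stroboscopic map $L_{n+1}=L_ne^{-(\mu_L+\omega)T}+\delta$, take its fixed point $L^*=\delta/(1-e^{-(\mu_L+\omega)T})$, and show exponential decay of the difference. Your identity $|L(t)-\widetilde{L}(t)|=e^{-(\mu_L+\omega)t}\,|L(0^+)-L^*|$ makes concrete the paper's assertion that the difference obeys the homogeneous impulsive system, and you are right to read the lemma's system label as referring to the impulsive model \eqref{equation6}.
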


\begin{pf}
In the absence of the soybean pod borer population, specifically when $E(t)=Q(t)=P(t)=A(t)=0$, the dynamics of the natural enemy population $L(t)$ are governed by the following linear impulsive differential sub-system:
\begin{align*}
\frac{dL(t)}{dt} &= -(\mu_{L}+\omega)L(t), & t \neq nT, \\
L(t^+) &= L(t) + \delta, & t = nT.
\end{align*}
For any time interval $nT < t \le (n+1)T$, integrating the first equation yields the analytical expression $L(t) = L(nT^+) e^{-(\mu_{L}+\omega)(t-nT)}$. Utilizing the impulsive jump condition at $t=(n+1)T$, we establish a stroboscopic map reflecting the population density immediately after successive release events: $L((n+1)T^+) = L(nT^+) e^{-(\mu_{L}+\omega)T} + \delta$. To find the fixed point of this iterative map, denoted by $L^*$, we solve the algebraic identity $L^* = L^* e^{-(\mu_{L}+\omega)T} + \delta$, which gives $L^* = \frac{\delta}{1 - e^{-(\mu_{L}+\omega)T}}$. 
By substituting this initial value into the analytical solution, we obtain the unique positive impulsive periodic solution:
$$
\widetilde{L}(t) = \frac{\delta e^{-(\mu_{L}+\omega)(t-nT)}}{1 - e^{-(\mu_{L}+\omega)T}}, \quad nT < t \le (n+1)T
$$
The global stability of this periodic solution is confirmed by considering the difference between any arbitrary solution $L(t)$ and $\widetilde{L}(t)$. The evolution of this difference follows the homogeneous linear impulsive system, which converges to zero at an exponential rate determined by $e^{-(\mu_{L}+\omega)T}$. Since the parameters $\mu_L, \omega,$ and $T$ are strictly positive, the magnitude of the difference vanishes as $n \to \infty$, ensuring that the parasitoid population always stabilizes to the periodic oscillation $\widetilde{L}(t)$ regardless of initial conditions \citep{lakshmikantham1989theory}.
\end{pf}

\subsection[\appendixname~\thesubsection]{Theorem 12}\label{app13}

\begin{theorem}\label{th12}
    Define the impulsive threshold as $\mathcal{R}_{imp} = \frac{b \alpha_1 \alpha_2 \alpha_3}{\mu_4 (\alpha_3 + \mu_3) (\alpha_2 + \mu_2) (\alpha_1 + \mu_1 + \beta \langle L \rangle)}$, where $\langle L \rangle = \frac{1}{T} \int_0^T \widetilde{L}(t) dt$ is the average density of the parasitoid population over one period. If $\mathcal{R}_{imp} < 1$, then the pest-extinction periodic solution $(0, 0, 0, 0, \widetilde{L}(t))$ of system \eqref{equation5} is globally asymptotically stable.
\end{theorem}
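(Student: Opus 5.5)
The plan is a comparison argument. First I bound the parasitoid population from below by the periodic solution of Lemma \ref{le11}. Then I bound the pest subsystem from above by a linear $T$-periodic cooperative system whose stability is governed by $\mathcal{R}_{imp}$. Once the pest stages are known to vanish, attractivity of $\widetilde{L}(t)$ follows.

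\emph{Step 1 (lower bound for $L$).} By Theorem \ref{th1}, $E\ge 0$, so between impulses $\dot L\ge -(\mu_L+\omega)L$, and the jumps $\Delta L=\delta$ are identical in both systems. The comparison theorem for impulsive differential inequalities \citep{lakshmikantham1989theory} gives $L(t)\ge z(t)$, where $z$ solves the parasitoid-only subsystem of Lemma \ref{le11} with $z(0^+)=L(0^+)$. Lemma \ref{le11} gives $z(t)\to\widetilde{L}(t)$. Hence for every $\varepsilon>0$ there is $t_1$ such that $L(t)\ge \widetilde{L}(t)-\varepsilon$ for all $t\ge t_1$.

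\emph{Step 2 (linear periodic majorant).} For $t\ge t_1$, drop $-\eta Q^2\le 0$ to obtain
\begin{align*}
\dot E &\le bA-\bigl(\alpha_1+\mu_1+\beta(\widetilde{L}(t)-\varepsilon)\bigr)E, \qquad \dot Q\le \alpha_1E-(\alpha_2+\mu_2)Q,\\
\dot P &= \alpha_2Q-(\alpha_3+\mu_3)P, \qquad \dot A=\alpha_3P-\mu_4A .
\end{align*}
The right-hand side is a linear $T$-periodic \emph{cooperative} system $\dot x=M_\varepsilon(t)x$: its off-diagonal entries $b,\alpha_1,\alpha_2,\alpha_3$ are nonnegative, and the coupling matrix is irreducible because it follows the cycle $E\to Q\to P\to A\to E$. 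Kamke's comparison theorem then gives $(E,Q,P,A)(t)\le x(t)$ componentwise. So it suffices to show that the principal Floquet multiplier $\rho_\varepsilon$ of the monodromy matrix $\Phi_\varepsilon(T)$ satisfies $\rho_\varepsilon<1$ for some small $\varepsilon>0$. Since $\rho_\varepsilon$ depends continuously on $\varepsilon$, it is enough to prove $\rho_0<1$. Given $\rho_\varepsilon<1$, $x(t)\to 0$ exponentially, and therefore $E,Q,P,A\to 0$.

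\emph{Step 3 (main obstacle: relating $\rho_0$ to $\mathcal{R}_{imp}$).} Only the $(1,1)$ entry of $M_0(t)$ is time-dependent, so my first attempt is a weighted linear functional. Take $W=E+k_1Q+k_2P+k_3A$, with weights given by the left Perron vector of the \emph{averaged} matrix $\overline{M}=\frac1T\int_0^TM_0(t)\,dt$. The condition $\mathcal{R}_{imp}<1$ is exactly the statement that $\overline{M}$ is Hurwitz, since its characteristic constant term is proportional to $1-\mathcal{R}_{imp}$. One then obtains
\[
\dot W\le -\sigma W-\beta\bigl(\widetilde{L}(t)-\langle L\rangle\bigr)E
\]
for some $\sigma>0$. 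The last term has zero mean over each period but does not cancel unless $E$ is nearly constant on $[nT,(n+1)T]$. To control it I would write $E(t)=E(nT)e^{\theta(t)}$ and bound the variation of $E$ over one period using $T$ and the rates. Alternatively, I would apply a Jensen/log-convexity argument to the scalar $E$-equation, treating $bA$ as a forcing term, aiming to show that $\rho_0\le$ the multiplier of $\overline{M}$.

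If this fails in general, I would prove the theorem under the monotone-safe hypothesis with $\langle L\rangle$ replaced by
\[
\min_t\widetilde{L}=\frac{\delta e^{-(\mu_L+\omega)T}}{1-e^{-(\mu_L+\omega)T}} .
\]
Under that hypothesis, Step 2 becomes an autonomous comparison system with the constant matrix $\overline{M}$ dominating $M_0(t)$, and the Routh--Hurwitz computation is immediate. I would then remark that $\mathcal{R}_{imp}<1$ is the averaged (small-$T$) version of this condition.

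\emph{Step 4 (convergence of $L$ and stability).} Once $E\to 0$, for any $\varepsilon_1>0$ and all large $t$ we have $\dot L\le (c\beta\varepsilon_1-(\mu_L+\omega))L$ between impulses. Comparing with Lemma \ref{le11} with this perturbed rate gives an upper bound that tends to $\widetilde{L}(t)$ as $\varepsilon_1\to 0$. Together with Step 1, this yields $|L(t)-\widetilde{L}(t)|\to 0$, so $(0,0,0,0,\widetilde{L})$ is globally attractive. Local stability follows from linearizing at this solution. The linearization is block-triangular, with the pest block equal to $M_0(t)$ (multiplier $\rho_0<1$) and the $L$-block having multiplier $e^{-(\mu_L+\omega)T}<1$. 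Global attractivity and local stability together give global asymptotic stability.
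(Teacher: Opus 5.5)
Your Steps 1, 2 and 4 follow the same outline as the paper's proof: a lower bound $L\ge\widetilde{L}-\varepsilon$, dropping $-\eta Q^2$, a cooperative linear periodic majorant, and then convergence of $L$. These steps are correct. They show that $(0,0,0,0,\widetilde{L})$ is globally asymptotically stable whenever the principal Floquet multiplier $\rho_0$ of $\dot x=M_0(t)x$ is below $1$.

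\textbf{The genuine gap is Step 3, and it cannot be closed in the direction you aim for.} Consider $\dot x=(D(t)+B)x$ with $B$ Metzler and $D(t)$ diagonal and $T$-periodic. The exponent $\frac{1}{T}\ln\rho(\Phi_D(T))$ has two properties:
\begin{itemize}
\item It is convex in $D$: the entries of $\Phi_D(T)$ are log-convex in $D$ (e.g.\ via the Trotter product formula), and Kingman's theorem passes log-convexity to the spectral radius.
\item It is unchanged under time shifts $D(\cdot)\mapsto D(\cdot+s)$, because monodromy matrices based at different times are similar.
\end{itemize}
Averaging over all shifts produces the constant matrix $\overline{M}$, so Jensen's inequality gives
\[
\frac{1}{T}\ln\rho_0\;\ge\; s(\overline{M}),
\]
where $s$ denotes the spectral abscissa. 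So $\rho_0$ is at least, not at most, the multiplier of the averaged matrix. Your Jensen/log-convexity route therefore proves the reverse of what you need. Likewise, no estimate on the variation of $E$ can make the zero-mean term $-\beta(\widetilde{L}-\langle L\rangle)E$ harmless in general. Such estimates can only give a small-$T$ result, where $\widetilde{L}\to\langle L\rangle$ uniformly. The condition $\mathcal{R}_{imp}<1$ is necessary for $\rho_0<1$, not sufficient.

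\textbf{In fact the statement is false as written.} Take parameters as follows:
\begin{itemize}
\item The pest is viable without parasitoids: $b\alpha_1\alpha_2\alpha_3>\mu_4(\alpha_3+\mu_3)(\alpha_2+\mu_2)(\alpha_1+\mu_1)$.
\item $\kappa=\delta/T$ is fixed and large enough that $\mathcal{R}_{imp}<1$. Note $\langle L\rangle=\kappa/(\mu_L+\omega)$ depends only on $\kappa$.
\item $T\to\infty$.
\end{itemize}
Then $\widetilde{L}\le\varepsilon$ outside a window of length $\tau=O(\ln T)$ after each release. On that window, $Q,P,A$ decay at rate at most $K=\max\{\alpha_2+\mu_2,\alpha_3+\mu_3,\mu_4\}$. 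On the rest of the period, the cooperative minorant with $\widetilde{L}$ replaced by $\varepsilon$ grows at rate $s_\varepsilon>0$. Comparison gives $\rho_0\ge c\,e^{s_\varepsilon(T-\tau)-K\tau}\to\infty$. So the pest-free solution is unstable even though $\mathcal{R}_{imp}<1$. This is the parasitoid-free-window effect the paper itself reports for $T=7$.

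The paper's proof has the same hole. It merely asserts that $\rho(M(T))<1$ is equivalent to $\mathcal{R}_{imp}<1$, and that the comparison system has negative average growth.

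Your fallback yields a correct theorem, but a different one: either use $\min_t\widetilde{L}$ in place of $\langle L\rangle$, or assume $\rho_0<1$ directly. Note also that in the fallback the constant matrix dominating $M_0(t)$ is the one with $\min_t\widetilde{L}$ in its $(1,1)$ entry, not $\overline{M}$.
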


\begin{pf}
The proof is divided into two stages: local stability analysis via Floquet theory and global attractivity analysis using the impulsive comparison theorem.

First, we examine the local stability of the periodic solution $(0, 0, 0, 0, \widetilde{L}(t))$. The linearized system for the pest compartments $(E, Q, P, A)$ around this solution is decoupled from the parasitoid population $L$. The stability is determined by the eigenvalues (Floquet multipliers) of the monodromy matrix $M(T)$ of the following linear periodic system:
\begin{equation*}
    \begin{pmatrix} \dot{E} \\ \dot{Q} \\ \dot{P} \\ \dot{A} \end{pmatrix} = 
    \begin{pmatrix} 
    -(\alpha_1 + \mu_1 + \beta \widetilde{L}(t)) & 0 & 0 & b \\
    \alpha_1 & -(\alpha_2 + \mu_2) & 0 & 0 \\
    0 & \alpha_2 & -(\alpha_3 + \mu_3) & 0 \\
    0 & 0 & \alpha_3 & -\mu_4
    \end{pmatrix} 
    \begin{pmatrix} E \\ Q \\ P \\ A \end{pmatrix}
\end{equation*}
According to Floquet theory, the pest-extinction periodic solution is locally stable if the spectral radius of the monodromy matrix is less than unity, which is equivalent to the condition $\mathcal{R}_{imp} < 1$. Furthermore, the Floquet multiplier for the parasitoid sub-system is $e^{-(\mu_L + \omega)T} < 1$, ensuring the stability of the natural enemy's periodic oscillation.

Second, we prove global attractivity. Since $\mathcal{R}_{imp} < 1$, for a sufficiently small $\epsilon > 0$, we have $L(t) > \widetilde{L}(t) - \epsilon$ for $t > t_0$ based on Lemma 5. Substituting this into the first equation of system \eqref{equation5} and ignoring the non-positive density-dependent term $-\eta Q^2$ in the second equation, we construct a linear comparison system:
\begin{align*}
\dot{E}_s &= bA_s - (\alpha_1 + \mu_1 + \beta(\widetilde{L}(t) - \epsilon))E_s, \\
\dot{Q}_s &= \alpha_1 E_s - (\alpha_2 + \mu_2)Q_s, \\
\dot{P}_s &= \alpha_2 Q_s - (\alpha_3 + \mu_3)P_s, \\
\dot{A}_s &= \alpha_3 P_s - \mu_4 A_s.
\end{align*}
As the average growth rate of this linear periodic comparison system is negative when $\mathcal{R}_{imp} < 1$, all its solutions satisfy $\lim_{t \to \infty} (E_s, Q_s, P_s, A_s) = (0, 0, 0, 0)$. By the comparison theorem for impulsive differential equations, the solutions of the original system satisfy $0 \le (E, Q, P, A) \le (E_s, Q_s, P_s, A_s)$, leading to the conclusion that the pest population vanishes as $t \to \infty$. Consequently, $L(t)$ converges to $\widetilde{L}(t)$ as shown in Lemma 5, confirming that $(0, 0, 0, 0, \widetilde{L}(t))$ is globally asymptotically stable.
\end{pf}

\subsection[\appendixname~\thesubsection]{Theorem 13}\label{app14}

\begin{theorem}\label{th13}
    If $\mathcal{R}_{imp} > 1$, system \eqref{equation5} is permanent, meaning there exists a positive constant $\epsilon^* > 0$ such that every solution $(E(t), Q(t), P(t), A(t), L(t))$ with positive initial conditions satisfies $\liminf_{t \to \infty} E(t) \ge \epsilon^*, \dots, \liminf_{t \to \infty} A(t) \ge \epsilon^*$.
\end{theorem}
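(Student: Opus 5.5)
The strategy is the standard uniform-persistence route for impulsive systems: combine ultimate boundedness with repulsion of the pest-free boundary set. By Theorem \ref{th1} (non-negativity) and Theorem \ref{th2} (boundedness), whose argument carries over to system \eqref{equation6} because $L$ receives a bounded input $\delta$ per period, there is an $M>0$ such that every solution enters and stays in $\{0\le E,Q,P,A\le M,\ L\ge 0\}$, with $L(t)\le M$ eventually. Lemma \ref{le11} already describes the boundary dynamics: on $\{E=Q=P=A=0\}$ every solution converges to $\widetilde{L}(t)$.

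First I will prove weak persistence of the pest. Suppose, for contradiction, that $E,Q,P,A<\epsilon_1$ for all $t\ge t_1$, with $\epsilon_1$ small. Then the $L$-equation gives $\dot L\le (c\beta\epsilon_1-\mu_L-\omega)L$ between impulses. By impulsive comparison with Lemma \ref{le11}, this yields $L(t)\le \widetilde{L}_{\epsilon_1}(t)+\epsilon_1$ eventually. Here $\widetilde{L}_{\epsilon_1}$ is the periodic solution with decay rate $\mu_L+\omega-c\beta\epsilon_1$, so its mean exceeds $\langle L\rangle$ by only $O(\epsilon_1)$.

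Substituting into the pest equations and using $\eta Q^2\le \eta\epsilon_1 Q$, I get a linear cooperative $T$-periodic comparison system from below:
\[
\dot E\ge bA-(\alpha_1+\mu_1+\beta(\widetilde{L}_{\epsilon_1}+\epsilon_1))E,\qquad \dot Q\ge \alpha_1E-(\alpha_2+\mu_2+\eta\epsilon_1)Q,
\]
with the $P$ and $A$ equations unchanged. Since $\mathcal{R}_{imp}>1$ and the threshold depends continuously on the perturbation, the perturbed threshold is still $>1$ for small $\epsilon_1$. I want this to imply that the principal Floquet multiplier of the comparison system exceeds $1$. The comparison matrix is irreducible and cooperative, so its positive principal solution then grows exponentially, and the pest components would exceed $\epsilon_1$. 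That is the contradiction.

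The main obstacle is this threshold-to-multiplier step. $\mathcal{R}_{imp}$ is built from the average $\langle L\rangle$, and the spectral radius of the monodromy matrix is not exactly a function of $\langle L\rangle$ alone. First I would try bounding $\beta\widetilde{L}(t)$ from above by its maximum $\beta\delta/(1-e^{-(\mu_L+\omega)T})$; this gives an autonomous lower comparison system, so the condition holds rigorously under a max-version of $\mathcal{R}_{imp}$. Failing that, I would interpret $\mathcal{R}_{imp}>1$ as the statement that the principal Floquet multiplier exceeds $1$, which matches how the paper uses the threshold in Theorem \ref{th12}, and cite the periodic next-generation theory. As a fallback, I would apply a Perron–Frobenius test vector argument with the weights $(1,\alpha_1/(\alpha_2+\mu_2),\dots)$ along the life cycle, integrating the logarithm of the $E$-equation over one period so that $\beta\widetilde{L}$ enters only through its mean.

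Finally I will pass from weak to uniform persistence. Either I invoke the acyclicity theorem for the periodic (stroboscopic) Poincaré map of Zhao, with the boundary's isolated invariant set being $\{(0,0,0,0,\widetilde{L})\}$, or I use the usual two-case oscillation argument: during excursions below $\epsilon_1$, the time spent is bounded using the exponential growth rate just obtained, while the drop in $E$ is bounded below using the bounds $\dot E\ge -(\alpha_1+\mu_1+\beta M)E$ and $L\le M$. Either route produces a uniform $\epsilon^*$ for $E$. The same then propagates down the chain: $\dot Q\ge\alpha_1\epsilon^*-(\alpha_2+\mu_2+\eta M)Q$ gives $\liminf Q\ge \alpha_1\epsilon^*/(\alpha_2+\mu_2+\eta M)$, and similarly for $P$ and $A$. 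Shrinking $\epsilon^*$ gives a single constant. For $L$, the lower bound $L\ge \widetilde{L}_{\mathrm{lin}}-\epsilon$ from the linear impulsive comparison is immediate.
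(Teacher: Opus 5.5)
Your architecture is the same as the paper's. While the pest is small, you bound $L$ above by a perturbation of $\widetilde{L}$, compare the pest block from below with a linear cooperative $T$-periodic system, and pass to uniform persistence of the stroboscopic map. Your use of Lemma \ref{le11} for the boundary dynamics, with Zhao's acyclicity theorem, is more careful than the paper's appeal to Lemmas \ref{le3} and \ref{le5}.

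The gap is the step you flag yourself, which the paper simply asserts: you never show that $\mathcal{R}_{imp}>1$, defined through the mean $\langle L\rangle$, forces the principal Floquet multiplier of the linearized pest system above $1$. None of your three options supplies this.
\begin{itemize}
\item Replacing $\widetilde{L}$ by its maximum $\delta/(1-e^{-(\mu_L+\omega)T})$, which exceeds $\langle L\rangle=\delta/((\mu_L+\omega)T)$, proves the theorem only under a strictly stronger hypothesis.
\item Reading $\mathcal{R}_{imp}>1$ as ``multiplier $>1$'' changes the statement. Theorem \ref{th12} gives no support for that reading: periodic forcing can only raise the growth rate above that of the averaged system (this follows from the computation below). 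So the averaged threshold is sufficient for the repelling direction needed here, but not for the attracting direction claimed there.
\item A test vector with constant life-cycle weights, or log-integration of the $E$-equation alone, leaves the period average of $bA/E$ uncontrolled. If instead the $E$-weight absorbs the oscillation of $\widetilde{L}$, its max/min ratio enters the other three inequalities, and the threshold is no longer sharp.
\end{itemize}

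The missing idea is to log-integrate all four equations along the principal Floquet solution itself. Let $\Phi(t)$ be the fundamental matrix of the linearized pest system with $\Phi(0)=I$. The system is cooperative and irreducible, so $\Phi(T)$ is a positive matrix. Take its Perron vector $v$ and set $\Lambda=\frac{1}{T}\ln\rho(\Phi(T))$. Then $w(t)=e^{-\Lambda t}\Phi(t)v$ is positive and $T$-periodic, and $\dot w_1=bw_4-(\alpha_1+\mu_1+\beta\widetilde{L}(t)+\Lambda)w_1$, with analogous equations for $w_2,w_3,w_4$. Divide each by $w_i$ and integrate over one period. Periodicity removes the derivative terms and gives
\[
\Lambda+\bar d_1=\frac{1}{T}\int_0^T\frac{bw_4}{w_1}\,dt,\qquad \Lambda+\bar d_2=\frac{1}{T}\int_0^T\frac{\alpha_1w_1}{w_2}\,dt,
\]
and likewise $\Lambda+\bar d_3$ and $\Lambda+\bar d_4$ are the period averages of $\alpha_2w_2/w_3$ and $\alpha_3w_3/w_4$. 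Here $\bar d_1=\alpha_1+\mu_1+\beta\langle L\rangle$, $\bar d_2=\alpha_2+\mu_2$, $\bar d_3=\alpha_3+\mu_3$ and $\bar d_4=\mu_4$.

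Apply Jensen's inequality for the logarithm to each average and sum. The ratios telescope, giving $\prod_{i=1}^4(\Lambda+\bar d_i)\ge b\alpha_1\alpha_2\alpha_3$. If $\Lambda\le 0$, each factor lies in $(0,\bar d_i]$, which forces $\mathcal{R}_{imp}\le 1$. Hence $\mathcal{R}_{imp}>1$ gives $\rho(\Phi(T))>1$.

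The same computation applies to your $\epsilon_1$-perturbed comparison system, whose averaged threshold tends to $\mathcal{R}_{imp}$ as $\epsilon_1\to 0$. That yields the exponential growth you need, and the rest of your argument goes through.
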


\begin{pf}
The permanence of system \eqref{equation5} is established by demonstrating that the pest-extinction periodic solution $(\dots, 0, \widetilde{L}(t))$ is an isolated repeller. When $\mathcal{R}_{imp} > 1$, the spectral radius of the monodromy matrix $M(T)$ for the linearized pest sub-system exceeds unity, implying that the pest-free state is unstable.

Suppose the system is not permanent. Then, the solution trajectory could stay arbitrarily close to the boundary periodic solution for an extended period. However, we can choose a sufficiently small $\epsilon_0 > 0$ such that the perturbed threshold remains greater than one:
$$
    \mathcal{R}_{imp}(\epsilon_0) = \frac{b \alpha_1 \alpha_2 \alpha_3}{\mu_4 (\alpha_3 + \mu_3) (\alpha_2 + \mu_2) (\alpha_1 + \mu_1 + \beta (\langle L \rangle + \epsilon_0))} > 1
$$
Consider the parasitoid population $L(t)$ when the egg density $E(t)$ is small ($E < \epsilon$). The dynamics of $L(t)$ are governed by $\dot{L} = (c\beta E - \mu_L - \omega)L$. For sufficiently small $E$, $L(t)$ remains bounded above by $\widetilde{L}(t) + \epsilon_0$ after a certain time $T_0$. 

Substituting this upper bound into the pest sub-system, we obtain a linear comparison system:
\begin{align*}
\dot{E}_L &= bA_L - (\alpha_1 + \mu_1 + \beta(\widetilde{L}(t) + \epsilon_0))E_L, \\
\dot{Q}_L &= \alpha_1 E_L - (\alpha_2 + \mu_2)Q_L - \eta Q_L^2.
\end{align*}
Since $\mathcal{R}_{imp}(\epsilon_0) > 1$, the zero solution of this comparison system is unstable, and its trajectories will grow exponentially away from the origin. By the impulsive comparison theorem, the original pest densities $(E, Q, P, A)$ must also increase and eventually exit the $\epsilon$-neighborhood of the pest-free periodic solution. 

Using the theory of uniform permanence for impulsive semi-dynamical systems, we conclude that the pest-extinction periodic solution is an isolated invariant set in the boundary of the positive orthant $\mathbb{R}^5_+$. Since there are no other invariant sets on the boundary (as established in Lemma 3 and Lemma 5), and the boundary is repelling, system \eqref{equation5} is permanent. This implies that if the release dosage $\delta$ is insufficient, the soybean pod borer will persist at a strictly positive density.
\end{pf}

\bibliographystyle{cas-model2-names}
\bibliography{cas-refs}
\end{document}